\centering\subfloat[#2]{\centering\BODY}\end{minipage}%
\def\Underline{\setbox0\hbox\bgroup\let\\\endUnderline}
\def\endUnderline{\vphantom{y}\egroup\smash{\underline{\box0}}\\}
\def\|{\verb|}
\theoremstyle{definition}
\newtheorem{definition}{Definition}{\bfseries}{\rmfamily}
\newtheorem{example}{Example}{\bfseries}{\rmfamily}
\newtheorem{condition}{Condition}{\bfseries}{\rmfamily}
\theoremstyle{theorem}
\newtheorem{theorem}{Theorem}{\bfseries}{\rmfamily}
\newtheorem{lemma}{Lemma}{\bfseries}{\rmfamily}
\newtheorem{corollary}{Corollary}{\bfseries}{\rmfamily}
\newcommand{\C}{\mathcal{C}}
\newcommand{\tw}{\mathit{tw}}
\newcommand{\prl}[3]{#2\mathrel{/\!/\!}_{#1}#3}
\newcommand{\its}[3]{#2\mathrel{\text{$/$\llap{$\backslash$}}}_{#1}#3}
\newcommand{\mscr}[1]{\mathscr{#1}}
\newcommand{\mcal}[1]{\mathcal{#1}}
\newcommand{\mrm}[1]{\mathrm{#1}}
\newcommand{\msf}[1]{\mathsf{#1}}
\newcommand{\mbb}[1]{\mathbb{#1}}
\newcommand{\mtt}[1]{\mathtt{#1}}
\newcommand{\suc}[1]{\msf{suc}_{#1}}
\newcommand{\pred}[1]{\msf{pred}_{#1}}
\newcommand{\point}[3][]{
	\draw[#1] (#2,#3-0.2) -- (#2,#3+0.2);
}
\newcommand{\interval}[4][]{
	\draw[#1] (#2,#4) -- (#3,#4);
}
\newcommand{\lrinterval}[4][]{
	\interval[#1]{#2}{#3}{#4}
	\point[#1]{#2}{#4}
	\point[#1]{#3}{#4}
}
\newcommand{\rtdlritvl}[7][]{%
	\lrinterval[#1]{#2}{#3}{#4}
	\draw[#1, densely dotted] (#2,#4) -- (#2,#5);
	\draw[#1, densely dotted] (#3,#4) -- (#3,#5);
	\point[#1,thick]{#2}{#5}
	\point[#1,thick]{#3}{#5}
	\draw (#2,#5-0.5) node{#6};
	\draw (#3,#5-0.5) node{#7};
}
\newcommand{\rtdlritvlf}[7][]{%
	\fill[red!8] (#2,#4) -- (#2,#5) -- (#3,#5) -- (#3,#4);
	\lrinterval[-, thick, densely dotted, color=red]{#2}{#3}{#4}
	\draw[-, densely dotted, color=red] (#2,#4) -- (#2,#5);
	\draw[-, densely dotted, color=red] (#3,#4) -- (#3,#5);
	\draw (#2,#5-0.5) node[color=red]{#6};
	\draw (#3,#5-0.5) node[color=red]{#7};
}
\tikzstyle{forbidden}=[style=thick, color=red]
\tikzstyle{forgotten}=[style=thick, densely dashed, color=red]
\tikzstyle{focused}=[style=thick, color=blue]
\tikzstyle{finterval}=[-, thick, snake=snake, segment amplitude=1.5pt, segment length=4pt, color=red]
\begin{document}


\title{Fixed-Treewidth-Efficient Algorithms for Edge-Deletion to Interval Graph Classes\thanks{This manuscript corrects some errors in~\cite{SaitohYB21}.}}

\author{Toshiki Saitoh\thanks{Kyushu Institute of Technology, \texttt{toshikis@ces.kyutech.ac.jp}}
 \and Ryo Yoshinaka\thanks{Tohoku University, \texttt{ryoshinaka@tohoku.ac.jp}}
 \and Hans L. Bodlaender\thanks{Utrecht University, \texttt{H.L.Bodlaender@uu.nl}}}

\date{}

\maketitle

\begin{abstract}
  For a graph class $\C$, the $\C$-\textsc{Edge-Deletion} problem asks for a given graph $G$ to delete the minimum number of edges from $G$ in order to obtain a graph in $\C$. We study the $\C$-\textsc{Edge-Deletion} problem for $\C$ the class of interval graphs and other related graph classes.
  It follows from Courcelle's Theorem that these problems are fixed parameter tractable when parameterized by treewidth. In this paper, we present concrete FPT algorithms for these problems. By giving explicit algorithms and analyzing these in detail, we obtain algorithms that are significantly faster than the algorithms obtained by using Courcelle's theorem.
\end{abstract}

\section{Introduction}

Intersection graphs are represented by geometric objects aligned in certain ways so that each object corresponds to a vertex and two objects intersect if and only if the corresponding vertices are adjacent. 
Intersection graphs are well-studied in the area of graph algorithms since there are many important applications and we can solve many NP-hard problems in general graphs in polynomial time on such graph classes. 
Interval graphs are intersection graphs which are represented by intervals on a line.
\textsc{Clique}, \textsc{Independent Set}, and \textsc{Coloring} on interval graphs can be solved in linear time and interval graphs have many applications in bioinformatics, scheduling, and so on. 
See~\cite{Golumbic:2004,Brandstadt:1999,Spinrad} for more details of interval graphs and other intersection graphs.

Graph modification problems on a graph class $\C$ are to find a graph in $\C$ by modifying a given graph in certain ways. 
$\C$-\textsc{Vertex-Deletion}, $\C$-\textsc{Edge-Deletion}, and $\C$-\textsc{Completion} are to find a graph in $\C$ by deleting vertices, deleting edges, and adding edges, respectively,  with the minimum cost. 
These problems can be seen as generalizations of many NP-hard problems. 
\textsc{Clique} is equivalent to \textsc{Complete-Vertex-Deletion}: we find a complete graph by deleting the smallest number of vertices.
%
Modification problems on intersection graph classes also have many applications.
For example, \textsc{Interval-Vertex/Edge-Deletion} problems have applications to DNA (physical) mapping~\cite{GoldbergGKS95,FrenkelPMFK10,WatermanG86}. 
%
%
Lewis and Yannakakis showed that $\C$-\textsc{Vertex-Deletion} is NP-complete for any nontrivial hereditary graph class~\cite{LewisY80}. A graph class $\C$ is hereditary if for any graph in $\C$, every induced subgraph of the graph is also in $\C$.
Since the class of intersection graphs are hereditary, $\C$-\textsc{Vertex Deletion} is NP-complete for any nontrivial intersection graph class $\C$. %
The problems $\C$-\textsc{Edge-Deletion} are also NP-hard when $\C$ is the class of perfect, chordal, split, circular arc, chain~\cite{NatanzonSS01}, interval, proper interval~\cite{GoldbergGKS95}, trivially perfect (a.k.a.\ nested interval)~\cite{SharanThesis}, threshold~\cite{Margot94}, permutation, weakly chordal, or circle graphs~\cite{BurzynBD06}.
See the lists in~\cite{ManciniThesis,BurzynBD06}.

Parameterized complexity is well-studied in the area of computer science.
A problem with a parameter $k$ is \emph{fixed parameter tractable}, \emph{FPT} for short, if there is an algorithm running in $f(k)n^c$ time where $n$ is the size of input, $f$ is a computable function and $c$ is a constant.
Such an algorithm is called an \emph{FPT algorithm}.
The \emph{treewidth} $\tw(G)$ of a graph $G$ represents treelikeness and is one of the most important parameters in parameterized complexity concerning graph algorithms. 
For many NP-hard problems in general, there are tons of FPT algorithms with parameter $\tw(G)$ by dynamic programming on tree decompositions. 
Finding the treewidth of an input graph is NP-hard and it is known that \textsc{Chordal-Completion} with minimizing the size of the smallest maximum clique is equivalent to the problem. 
There is an FPT algorithm for computing the treewidth of a graph by Bodlaender~\cite{Bodlaender96} which  
runs in $O(f(\tw(G))(n+m))$ time where $n$ and $m$ are the numbers of vertices and edges of a given graph: i.e., the running time is linear in the size of input. 
Courcelle showed that every problem that can be expressed in monadic second order logic (MSO$_2$) has a linear time algorithm on graphs of bounded treewidth~\cite{CourcelleEngelfiet}.
Some intersection graph classes, for example interval graphs, proper interval graphs, chordal graphs, and permutation graphs, can be represented by MSO$_2$~\cite{Courcelle06} and thus there are FPT algorithms for \textsc{Edge-Deletion} problems on such graph classes. 
However, the algorithms obtained by Courcelle's theorem have a very large hidden constant factor even when the treewidth is very small, since the running time is the exponential tower of the coding size of the MSO$_2$ expression.

\medskip
\noindent{\bf Our results:}
We propose concrete FPT algorithms for \textsc{Edge-Deletion} to interval graphs and other related graph classes, when parameterized by the treewidth of the input graph.
Our algorithms virtually compute a set of edges $S$ with the minimum size such that $G-S$ is in a graph class $\C$ by using dynamic programming on a tree-decomposition. 
We maintain possible alignments of geometric objects corresponding to vertices in the bag of each node of the tree-decomposition. 
Alignments of the objects of forgotten vertices are remembered only relatively to the objects of the current bag.
If two forgotten objects have the same relative position to the objects of the current bag, we remember only the fact that there is at least one forgotten object at that position.   
In this way, we achieve the fixed-parameter-tractability, while guaranteeing that no object pairs of non-adjacent vertices of the input graph will intersect in our dynamic programming algorithm.
Our algorithms run in $O(f(\tw(G))\cdot (n+m))$ time where $n$ and $m$ are the numbers of vertices and edges of the input graph.
Our explicit algorithms are significantly faster than those obtained by using Courcelle's theorem.
We also analyze the time complexity of our algorithms parameterized by pathwidth which is analogous to treewidth. 
The relation among the graph classes for which this paper provides $\C$-\textsc{Edge-Deletion} algorithms is shown in \figurename~\ref{fig:hasse}.
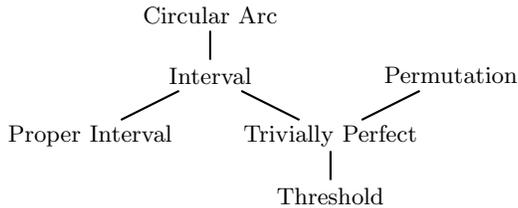
\begin{figure}[tbh]
	\centering
	\begin{tikzpicture}[scale=0.8, every node/.style=rectangle,fill=white,inner sep=0pt,minimum size=4mm]\small
		\node[] (CA) at (2,3) {Circular Arc};
		\node[] (IV) at (2,2) {Interval};
		\node[] (PI) at (0,1) {Proper Interval};
		\node[] (TP) at (4,1) {Trivially Perfect};
		\node[] (TS) at (4,0) {Threshold};
		\node[] (PM) at (6,2) {Permutation};
		\draw[thick,-] (CA) to (IV);
		\draw[thick,-] (IV) to (PI);
		\draw[thick,-] (IV) to (TP);
		\draw[thick,-] (TP) to (TS);
        \draw[thick,-] (TP) to (PM);
	\end{tikzpicture}
\caption{\label{fig:hasse} The graph classes of which this paper presents algorithms for the edge-deletion problems.}
\end{figure}


\medskip
\noindent{\bf Related works:}
Another kind of common parameters considered in parameterized complexity of graph modification problems is the number of vertices or edges to be removed or to be added.
Here we review preceding studies on those problems for intersection graphs with those parameters.

Concerning parameterized complexity of $\C$-\textsc{Vertex-Deletion},
Hof et~al.\ proposed an FPT algorithm for \textsc{Proper-Interval-Vertex-Deletion}~\cite{HofV13}, and 
Marx proposed an FPT algorithm for \textsc{Chordal-Vertex-Deletion}~\cite{Marx10}.
Heggernes et~al.\ showed \textsc{Perfect-Vertex-Deletion} and \textsc{Weakly-Chordal-Vertex-Deletion} are W[2]-hard~\cite{HeggernesHJKV13}.
Cai showed that $C$-\textsc{Vertex/Edge-Deletion} are FPT when $\C$ is characterized by a finite set of forbidden induced subgraphs~\cite{Cai96}.

For modification problems on interval graphs,
Villanger et~al.\ presented an FPT algorithm for \textsc{Interval-Completion}~\cite{VillangerHPT09},
and Cao and Marx presented an FPT algorithm for \textsc{Interval-Vertex-Deletion}~\cite{CaoM15}.
Cao improved these algorithms and developed an FPT algorithm for \textsc{Edge-Deletion}~\cite{Cao16}.


It is known that \textsc{Threshold-Edge-Deletion}, \textsc{Chain-Edge-Deletion} and \textsc{Trivially-Perfect-Edge-Deletion} are FPT, since threshold graphs, chain graphs and trivially perfect graphs are characterized by a finite set of forbidden induced subgraphs~\cite{Cai96}. 
Nastos and Gao presented faster algorithms for the problems~\cite{NastosG12}, 
and Liu et~al.\ improved their algorithms to $O(2.57^k(n+m))$ and $O(2.42^k(n+m))$ using modular decomposition trees~\cite{LiuWYCC15}, where $k$ is the number of deleted edges.
There are algorithms to find a polynomial kernel for \textsc{Chain-Edge-Deletion} and \textsc{Trivially Perfect-Edge-Deletion}~\cite{BessyP13,DrangeP18}. 

\medskip
\noindent{\bf Organization of this article:}
Section~2 prepares the notation and definitions used in this paper.  
We propose an FPT algorithm for \textsc{Interval-Edge-Deletion} in Section~3. 
We then extend the algorithm related to the interval graphs in Section~4. 
We conclude this paper and provide some open questions in Section~\ref{sec:conc}.



\section{Preliminaries}
For a set $X$, its cardinality is denoted by $|X|$. 
A \emph{partition} of $X$ is a tuple $(X_1,\dots,X_k)$ of subsets of $X$ such that $X=X_1 \cup \dots \cup X_k$ and $X_i \cap X_j = \emptyset$ if $1 \le i < j \le k$,
where we allow some of the subsets to be empty.
For entities $x,y,z \in X$, we let $x[y/z] = y$ if $x = z$, and $x[y/z]=x$ otherwise.
For a subset $Y \subseteq X$, define $Y[y/z] = \{\, x[y/z] \mid x \in Y\,\}$.

For a linear order $\pi$ over a finite set $X$, the maximum and the minimum elements of $X$ w.r.t.\ $\pi$ are denoted by $\max_\pi X$ and $\min_\pi X$, respectively.
We denote the successor of $x \in X$ w.r.t.\ $\pi$ by $\suc{\pi}(x) = \min_\pi\{\, y \in X \mid x <_\pi y\,\}$. 
Note that $\suc{\pi}(\max_\pi X)$ is undefined.
Similarly $\pred{\pi}(x)  = \max_\pi\{\, y \in X \mid y <_\pi x\,\}$ denotes the predecessor of $x$. 

A simple graph $G=(V,E)$ is a pair of vertex and edge sets, where each element of $E$ is a subset of $V$ consisting of exactly two elements.

A \emph{tree-decomposition} of $G=(V,E)$ is a tree $T$ such that\footnote{We use the terms ``vertices'' for an input graph and ``nodes'' for a tree-decomposition.}
\begin{itemize}
	\item to each node of $T$ a subset of $V$ is assigned,
	\item if the assigned sets of two nodes of $T$ contain a vertex $u \in V$, then so does every node on the path between the two nodes,
	\item for each $\{u,v\} \in E$, there is a node of $T$ whose assigned set includes both $u$ and $v$.
\end{itemize}
The \emph{width} of a tree-decomposition is the maximum cardinality of the assigned sets minus one
and the \emph{treewidth} of a graph is the smallest width of its tree-decompositions.
A tree-decomposition is said to be \emph{nice} if it is rooted, the root is assigned the empty set, and its nodes are grouped into the following four:
\begin{itemize}
	\item \emph{leaf nodes}, which have no children and are assigned the empty set,
	\item \emph{introduce nodes}, each of which has just one child, where the set assigned to the node adds one vertex to the child's set,
	\item \emph{forget nodes}, each of which has just one child, where the set assigned to the node removes one vertex from the child's set,
	\item \emph{join nodes}, each of which has just two children, where the same vertex set is assigned to the node and its children.
\end{itemize}
It is known that every tree-decomposition has a nice tree-decomposition of the same width whose size is $O(k|V|)$, where $k$ is the treewidth of the tree-decomposition~\cite{Kloks94,CyganFKLMPPS15}. 
Hereafter, under a fixed graph $G$ and a fixed nice tree-decomposition $T$, we let $X_s$ denote the subset of $V$ assigned to a node $s$ of a tree-decomposition and $X_{\le s}$ denote the union of all the subsets assigned to the node $s$ and its descendant nodes.
We call vertices in $X_s$ and in $X_{\le s} - X_s$ \emph{active} and \emph{forgotten}, respectively.
Moreover, we define $E_{s} = \{\, \{u,v\} \in E \mid u,v \in X_{s}\,\}$ and $E_{\le s} = \{\, \{u,v\} \in E \mid u,v \in X_{\le s}\,\}$. 
Given a tree decomposition of treewidth $k$, we can compute a nice tree-decomposition with treewidth $k$ and $O(kn)$ nodes in $O(k^2(n+m))$ time~\cite{CyganFKLMPPS15}. 

A tree-decomposition is called a \emph{path-decomposition} if the tree is a path.
The \emph{pathwidth} of a graph is the smallest width of its path-decompositions.
Every path-decomposition has a nice path-decomposition of the same pathwidth, which consists of leaf, introduce, forget, but not join nodes.

The problem we tackle in this paper is given as follows.
\begin{definition}
For a graph class $\C$, the \textsc{$\C$-Edge-Deletion} is a problem to find the minimum natural number $c$ such that
 there is a subgraph $G'=(V,E')$ of $G$ with $G' \in \C$ and $|E|-|E'| = c$ for an input simple graph $G = (V,E)$.
\end{definition}
In the succeeding sections,
for different classes $\C$ of intersection graphs,
 we present algorithms for \textsc{$\C$-Edge-Deletion} that run in linear time in the input graph size when the treewidth is bounded.
We assume that the algorithm takes a nice tree-decomposition $T$ of $G$ in addition as input~\cite{Bodlaender96,Kloks94}. 
Our algorithms are dynamic programming algorithms that recursively compute solutions (and some auxiliary information) for the subproblems on $(X_{\le s},E_{\le s})$ for each node $s$ in the given tree-decomposition from leaves to the root.


\section{Finding a Largest Interval Subgraph}\label{sec:interval}
An \emph{interval representation} $\pi$ over a set $X$ is a linear order over the set $\mathit{LR}_X = \mathit{L}_X \cup \mathit{R}_X \cup \{\bot,\top\}$ with $\mathit{L}_X = \{\, l_x \mid x \in X\,\}$ and $\mathit{R}_X = \{\, r_x \mid x \in X\,\}$ such that $\bot <_\pi l_x <_\pi r_x <_\pi \top$ for all $x \in X$.
An \emph{interval} is a pair $(p,q) \in \mathit{LR}_X \times \mathit{LR}_X$.
We say that two intervals $(p_1,q_1)$ and $(p_2,q_2)$ \emph{intersect}, denoted by $\its{\pi}{(p_1,q_1)}{(p_2,q_2)}$ if $p_1<_\pi q_2$ and $p_2 <_\pi q_1$.  Otherwise, we write $\prl{\pi}{(p_1,q_1)}{(p_2,q_2)}$.

The \emph{interval graph $G_\pi$ of an interval representation $\pi$ on $V$} is $(V,\mcal{E}_\pi)$ where
\[
\mcal{E}_\pi = \{\,\{u,v\} \subseteq V \mid \text{$\its{\pi}{(l_u,r_u)}{(l_v,r_v)}$ and $u \neq v$} \,\}.
\]
\figurename~\ref{fig:interval} (a) and (b) show an example of an interval representation and the defined interval graph, respectively.
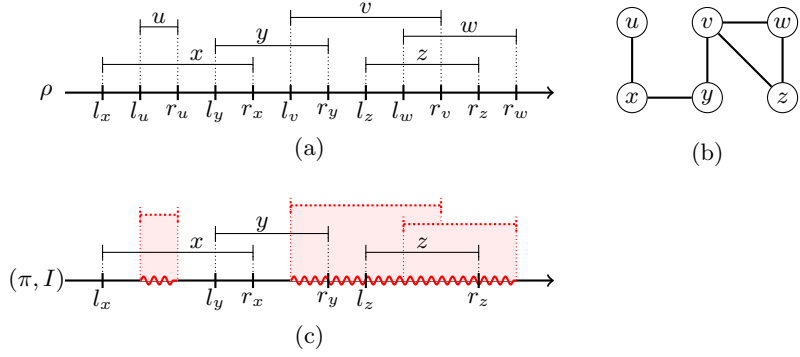
\begin{figure}[t]
	\centering
	\begin{tikzpicture}[scale=0.5]\small
		\draw(5.5, -2.5) node{{(a)}};
		\draw(-1.5,-1) node{$\rho$};
		\draw[->,thick] (-1,-1)--(12,-1);	
		\rtdlritvl{0}{4}{-0.25}{-1}{$l_{x}$}{$r_{x}$}
		\draw (2.5,0) node{$x$};
		\rtdlritvl{3}{6}{0.25}{-1}{$l_{y}$}{$r_{y}$}
		\draw (4.25,0.5) node{$y$};
		\rtdlritvl{7}{10}{-0.25}{-1}{$l_{z}$}{$r_{z}$}
		\draw (8.5,0) node{$z$};
		\rtdlritvl{1}{2}{0.75}{-1}{$l_{u}$}{$r_{u}$}
		\draw (1.5,1) node{$u$};
		\rtdlritvl{5}{9}{1}{-1}{$l_{v}$}{$r_{v}$}
		\draw (7,1.25) node{$v$};
		\rtdlritvl{8}{11}{0.5}{-1}{$l_{w}$}{$r_{w}$}
		\draw (9.75,0.75) node{$w$};
	\begin{scope}[shift={(0,-5)}]
		\draw(5.5, -2.5) node{{(c)}};
		\draw(-1.75,-1) node{$(\pi,I)$};
		\draw[->,thick] (-1,-1)--(12,-1);	
		\rtdlritvlf{1}{2}{0.75}{-1}{}{}
		\rtdlritvlf{5}{9}{1}{-1}{}{}
		\rtdlritvlf{8}{11}{0.5}{-1}{}{}
		\interval[-, thick, snake=snake, segment amplitude=1.5pt, segment length=4pt, color=red]{1}{2}{-1}
		\interval[-, thick, snake=snake, segment amplitude=1.5pt, segment length=4pt, color=red]{5}{11}{-1}
		\rtdlritvl{0}{4}{-0.25}{-1}{$l_{x}$}{$r_{x}$}
		\draw (2.5,0) node{$x$};
		\rtdlritvl{3}{6}{0.25}{-1}{$l_{y}$}{$r_{y}$}
		\draw (4.25,0.5) node{$y$};
		\rtdlritvl{7}{10}{-0.25}{-1}{$l_{z}$}{$r_{z}$}
		\draw (8.5,0) node{$z$};
	\end{scope}
	\end{tikzpicture}
\quad\quad
	\begin{tikzpicture}[scale=0.5, every node/.style=circle,fill=white,inner sep=0pt,minimum size=4mm]\small
	\node (o) at (0,-6) {}; 
	\draw(2, -1.0) node{{(b)}};
	\node[draw,fill] (u1) at (0,0.5) {$x$};
	\node[draw,fill] (u2) at (2,0.5) {$y$};
	\node[draw,fill] (u3) at (4,0.5) {$z$};
	\node[draw,fill] (w1) at (0,2.5) {$u$};
	\node[draw,fill] (w2) at (2,2.5) {$v$};
	\node[draw,fill] (w3) at (4,2.5) {$w$};
	\draw[thick,-] (u1) to (u2);
	\draw[thick,-] (u1) to (w1);
	\draw[thick,-] (u2) to (w2);
	\draw[thick,-] (u3) to (w2);
	\draw[thick,-] (u3) to (w3);
	\draw[thick,-] (w2) to (w3);
	\end{tikzpicture}
\caption{\label{fig:interval} (a) Visualization of interval representation $\rho$. (b) Interval graph $G_\rho$.
(c) For $X_s = \{x,y,z\}$ and $X_{\le s} - X_s = \{u,v,w\}$, we have $[u]_\rho^s = \{l_u,r_u\}$, $[v]_\rho^s = [w]_\rho^s = \{l_v,r_v,l_w,r_w\}$, and $\mscr{A}(\rho,s) = (\pi,I,0)$ where $I = \{(l_{x},l_{x}),(r_{x},r_{z})\}$.}
\end{figure}

This section presents an FPT algorithm for the interval edge deletion problem w.r.t.\ the treewidth.
Let $G = (V, E)$ be an input graph and $s$ a node of a nice tree-decomposition $T$ of $G$.
On each node $s$ of $T$, for each interval representation $\rho$ over $X_{\le s}$ that gives an interval subgraph of $(X_{\le s},E_{\le s})$, we would like to remember some pieces of information about $\rho$, which we call the ``abstraction'' of $\rho$.
The abstraction is the triple $(\pi,I,c)$,
where the linear order $\pi$ over $\mathit{LR}_{X_s}$ is the restriction of $\rho$ to $X_s$,
forgotten vertices are represented in $I$ by anchoring the ends of intervals formed by forgotten vertices to active vertices,
and $c$ counts the number of edges of $E-E_\rho$ such that at least one of their ends is forgotten.
For each forgotten vertex $w \in X_{\le s} - X_s$, let the \emph{intersection closure of $w$ (w.r.t.\ $\rho$ and $s$)} be the smallest set $[w]_\rho^s \subseteq \mathit{LR}_{X_{\le s} - X_s}$ such that
\begin{itemize}
\item $l_w,r_w \in [w]_\rho^s$ and
\item if $l_{u},r_{u} \in [w]_\rho^s$, $v \in X_{\le s} - X_s$ and $\its{\rho}{(l_{u},r_{u})}{(l_{v},r_{v})}$, then $l_{v},r_{v}  \in [w]_\rho^s$.
\end{itemize}
For an interval representation $\rho$ over $X_{\le s}$ such that $G_\rho$ is a subgraph of $(X_{\le s}, E_{\le s})$,
 we define the \emph{abstraction $\mscr{A}(\rho,s)$ of $\rho$ for $s$} to be the triple $(\pi,I,c)$ such that%
\footnote{Actually presence of intervals $(\bot,\bot)$ and $(\pred{\pi}(\top),\pred{\pi}(\top))$ in $I$ does not matter.}
\begin{itemize}
	\item $\pi$ is the restriction of $\rho$ to $\mathit{LR}_{X_s}$,
	\item $I = \{\, (p,q) \in \mathit{LR}_{X_s} \times \mathit{LR}_{X_s} \mid \text{there is $w \in X_{\le s} - X_s$ such that}$\\ \text{\hspace{2.5cm}$p <_{\rho} \min_{\rho}[w]_\rho^s <_{\rho} \suc{\pi}(p)$ and $q <_{\rho} \max_{\rho}[w]_\rho^s <_{\rho} \suc{\pi}(q)$}\,\},
	\item $c = |E_{\le s} - \mcal{E}_\rho - E_s| $ \\ \phantom{$c$}${}
	= |\{\, \{u,v\} \in E_{\le s} \mid u \notin {X_s}$ and $\prl{\rho}{(l_u,r_u)}{(l_v,r_v)}\,\}|
	$.
\end{itemize}
We call elements of $I$ \emph{forbidden intervals} because introducing a new vertex interval intersecting a forbidden interval means making the new vertex and a forgotten vertex adjacent, which is forbidden.
Figure~\ref{fig:interval} (c) shows an example of $I$.
Since forbidden intervals are formed by intersection closures, no distinct forbidden intervals may intersect.
If $s$ is the root of a nice tree-decomposition, the abstraction will be $\mscr{A}(\rho,s)=(o,\{(\bot,\bot)\},c_\rho)$ where $o$ is the trivial order on $\{\bot,\top\}$ such that $\bot <_o \top$ and $c_\rho = |E-E_\rho|$. That is, the smallest $c_\rho$ for an interval subgraph $G_\rho$ is the solution to our problem.
%
%

However, we do not have to compute $\mscr{A}(\rho,s)$ of all the possible interval representations $\rho$ over $X_{\le s}$ on each node $s$.
We say that $(\pi,I,c)$ \emph{dominates} $(\pi',I',c')$ if and only if $\pi=\pi'$, $I \sqsubseteq_\pi I'$ and $c \le c'$,
where we write $I \sqsubseteq_\pi I'$ if every forbidden interval of $I$ is inside some forbidden interval of $I'$:
i.e., for every $(p,q) \in I$, there is $(p',q') \in I'$ such that $p' \le_{\pi} p$ and $q \le_{\pi} q'$.
In this case, every possible way of introducing new intervals to $\rho'$ is also possible for $\rho$ by cheaper or equivalent cost.
Therefore, it is enough to remember $\mscr{A}(\rho,s)$ discarding $\mscr{A}(\rho',s)$.
For a set of abstractions, the process of removing ones which are dominated by others is called \emph{reduction}.
We call a set of abstractions \emph{reduced} if it has no pair of distinct elements such that one dominates the other.
\begin{lemma}
Suppose $\mscr{A}(\rho,s)=(\pi,I,c)$ dominates $\mscr{A}(\rho',s)=(\pi,I',c')$.
For any extension $\sigma'$ of $\rho'$ such that $\mcal{E}_{\sigma'} \subseteq E_{\le s}$, there is an extension $\sigma$ of $\rho$ such that $\mcal{E}_{\sigma'} \subseteq \mcal{E}_{\sigma} \subseteq E_{\le s}$.
\end{lemma}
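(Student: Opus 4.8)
The plan is to build $\sigma$ from $\sigma'$ by freezing the active vertices and transplanting the extra structure of $\sigma'$ into the roomier forbidden landscape of $\rho$. Since $\pi=\pi'$, the active endpoints $\mathit{LR}_{X_s}$ occupy identical relative positions in all four orders and cut the line into the gaps $(p,\suc{\pi}(p))$. Every endpoint of $\sigma'$ other than the active ones sits in one such gap, so first I would record, for each of them, its gap together with the left-to-right order of the endpoints sharing that gap. With $\pi$ fixed, this data already determines every intersection of $\sigma'$ in which no vertex forgotten at $s$ participates.

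The next step exploits $I\sqsubseteq_\pi I'$. Because each forbidden interval of $I$ lies inside one of $I'$, the union of $\rho$'s forbidden intervals is contained in the union of $\rho'$'s; hence any point or sub-gap that is free of forbidden intervals for $\sigma'$ is still free for $\sigma$. I would then insert the transplanted endpoints into the matching gaps of $\sigma$, in the same relative order, sliding them to positions that avoid every forbidden interval of $I$; the containment just noted guarantees that such positions exist whenever they existed for $\sigma'$. A routine check then shows the result is a legitimate interval representation that agrees with $\rho$ on the active part, since we never reorder an $l_x$ past its $r_x$ and never disturb the blocks $\bot<_\sigma l_x<_\sigma r_x<_\sigma\top$.

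To establish $\mcal{E}_{\sigma'}\subseteq\mcal{E}_{\sigma}\subseteq E_{\le s}$ I would split on whether an edge touches a vertex forgotten at $s$. For a pair avoiding such vertices, the four relevant endpoints have the same relative order in $\sigma$ as in $\sigma'$, so the two intervals intersect in $\sigma$ exactly when they intersect in $\sigma'$; these edges transfer verbatim, and being realized in the valid $\sigma'$ they belong to $E_{\le s}$. For a pair meeting a forgotten vertex, the transplanted intervals were placed so as to meet no forbidden interval of $\rho$, so they contribute no spurious intersection with a forgotten vertex, and the forgotten-incident intersections of $\sigma$ are precisely those carried over from $\rho$.

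The hard part will be reconciling exactly these forgotten-incident edges: showing that no edge of $\sigma'$ between two vertices forgotten at $s$, or between a forgotten and an active vertex, is lost in $\sigma$, and that no forgotten-incident intersection of $\sigma$ is a non-edge of $E_{\le s}$. This is where $c\le c'$ must enter, as it says that $\rho$ discards no more such edges than $\rho'$; I would combine it with the nesting $I\sqsubseteq_\pi I'$ to re-anchor, inside the enlarged free space available to $\rho$, the forgotten intervals that witness the edges $\sigma'$ keeps. Carrying out this simultaneous relocation so that (i) every intersection $\sigma'$ realizes is reproduced and (ii) no forbidden interval is ever straddled is the delicate bookkeeping at the heart of the argument: each of $\pi=\pi'$, $I\sqsubseteq_\pi I'$, and $c\le c'$ supplies one ingredient, but making all three hold at once—rather than any single inequality—is the real obstacle.
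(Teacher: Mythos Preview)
The paper states this lemma without proof, so there is no argument of the authors' to compare against. Your outline for the part that \emph{can} be made to work---placing the new (post-$s$) endpoints of $\sigma'$ into the same $\pi$-gaps of $\rho$, using $I\sqsubseteq_\pi I'$ to guarantee room outside the forbidden intervals, and thereby reproducing all intersections among active and newly introduced vertices---is the right idea and would go through.

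The genuine gap is exactly where you flag the ``hard part'': the forgotten-incident edges. An extension $\sigma$ of $\rho$ must agree with $\rho$ on $X_{\le s}$, so the edges of $\mcal{E}_\sigma$ touching a forgotten vertex are precisely those of $\mcal{E}_\rho$; you are not permitted to ``re-anchor'' or relocate forgotten intervals as your last paragraph proposes. And the hypothesis $c\le c'$ is only a cardinality bound $|E_{\le s}-\mcal{E}_\rho-E_s|\le|E_{\le s}-\mcal{E}_{\rho'}-E_s|$, not a set inclusion, so it cannot force $\mcal{E}_{\rho'}\subseteq\mcal{E}_\rho$ on the forgotten part. In fact the set-inclusion conclusion $\mcal{E}_{\sigma'}\subseteq\mcal{E}_\sigma$ fails in general: take $X_s=\emptyset$, $X_{\le s}=\{a,b,c\}$ with all three pairs in $E$, and let $\rho,\rho'$ realize the two different $2$-edge paths on $\{a,b,c\}$. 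Then $\mscr{A}(\rho,s)=\mscr{A}(\rho',s)$ (same trivial $\pi$, $I=I'=\{(\bot,\bot)\}$, $c=c'=1$), so each dominates the other, yet with $\sigma'=\rho'$ no extension $\sigma$ of $\rho$ can contain the edge of $\mcal{E}_{\rho'}\setminus\mcal{E}_\rho$.

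What the algorithm actually needs---and what the informal sentence before the lemma expresses (``cheaper or equivalent cost'')---is the weaker claim $|\mcal{E}_\sigma|\ge|\mcal{E}_{\sigma'}|$ (equivalently, the total deletion cost through $\rho$ is no worse than through $\rho'$). That version \emph{does} follow: your transplant argument matches the non-forgotten intersections exactly, and then $c\le c'$ handles the forgotten-incident edges as a count. I would rewrite your proof to target this cardinality statement and drop the attempt to recover a set inclusion that the hypotheses do not support.
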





Our algorithm calculates a reduced set $\mscr{I}_s$ of abstractions of interval representations of interval subgraphs of $({X}_{\le s},{E}_{\le s})$ for each node $s$ of $T$ which satisfies the following invariant.
\begin{condition}\label{cond:invariant_i}{\ }
\begin{itemize}
 \item Every element $(\pi,I,c) \in \mscr{I}_s$ is the abstraction of some interval representation of an interval subgraph of $(X_{\le s}, E_{\le s})$ for $X_s$,
 \item Any interval representation $\rho$ of any interval subgraph of $(X_{\le s}, E_{\le s})$ has an element of $\mscr{I}_s$ that dominates its abstraction $\mscr{A}(\rho,X_s)$.
 \end{itemize}
\end{condition}
Since $\mscr{I}_s$ is reduced, if $X_s=\emptyset$, we have $\mscr{I}_s=\{(o,I,c)\}$ for some $I \subseteq \{(\bot,\bot)\}$ and $c \in \mbb{N}$, where $o$ is the trivial order such that $\bot <_o \top$.
Particularly for the root node $s$, the number $c$ is the least number such that one can obtain an interval subgraph by removing $c$ edges from $G$.
That is, $c$ is the solution to our problem.
If $s$ is a leaf, $\mscr{I}_s=\{(o,\emptyset,0)\}$ by definition.
It remains to show how to calculate $\mscr{I}_{s}$ from the child(ren) of $s$, while preserving the invariant (Condition~\ref{cond:invariant_i}).



\paragraph*{Introduce Node:}
Suppose that $s$ is an introduce node. It has just one child $t$ such that $X_s = X_t \cup \{x\}$.
For an extension $\pi'$ of $\pi$ for $(\pi,I,c) \in \mscr{I}_t$, let us anchor the new points $l_x$ and $r_x$ to points $p_0$ and $q_0$ in $\mathit{LR}_{X_t}$:
\begin{align*}
	p_0 
		&= \pred{\pi'}(l_x) \,,
\\
	q_0 
		&= \begin{cases}
			\pred{\pi'}(r_x)	&	\text{if $\pred{\pi'}(r_x) \neq l_x$,}
\\			\pred{\pi'}(l_x) 	&	\text{otherwise.}
\end{cases}	
\end{align*}
We say $\pi'$ \emph{respects} $E$ and $I$ if
\begin{itemize}
		\item $\{ x,u \} \notin E$ for $u \in X_{t}$ implies $\prl{\pi'}{(l_x,r_x)}{(l_u,r_u)}$,
		\item $(p,q) \in I$ implies $\prl{\pi}{(p_0,q_0)}{(p,q)}$,
\end{itemize}
respectively.
If $\pi'$ does not respect $E$ (\emph{resp.}\ $I$), then it means that we are creating an edge between $x$ and a vertex $u$ in $X_s$ (\emph{resp.}\ in $X_{\le s} - X_s$) which are not adjacent in the input graph $G$.
Figure~\ref{fig:interval_introduce} shows examples of extensions that do or do not respect $I$. 
For each interval representation $\pi'$ extending $\pi$ to $\mathit{LR}_{X_s}$ that respects $E$ and $I$, we put one or two elements into $\mscr{I}_s'$ by the following manner.
Some forbidden intervals that had an anchor $p_0$ or $q_0$ in $\mscr{I}_{t}$ may be re-anchored to $l_x$ or $r_x$.
Forbidden intervals are partitioned into three, where the ones in $I_\mrm{L}$ and $I_\mrm{R}$ are certainly left to and right to $x$, respectively: 
\begin{align*}
	I_\mrm{L} &= \{\, (p,q) \in I \mid p <_{\pi} q_0 \,\} \,,
\\	I_\mrm{R} &= \{\, (p,q) \in I \mid p_0 <_{\pi} q \,\} \,,
\\	I_\mrm{M} &= \{\, (p,q) \in I \mid p = q = p_0 = q_0 \,\} \,.
\end{align*}
Note that $I_\mrm{M}$ is either empty or a singleton and that $I_\mrm{L}$ and $I_\mrm{R}$ are mutually exclusive, since $\pi'$ respects $I$.
If a forbidden interval has a point anchored to $q_0$ and is right to $x$, then it will be re-anchored:
\begin{align*}
	I_\mrm{R}' &= \{\, (p[r_x/q_0],q[r_x/q_0]) \mid (p,q) \in I_\mrm{R} \,\} \,,
\\	I_\mrm{M}' &= \{\, (r_x,r_x) \mid (q_0, q_0) \in I_\mrm{M} \,\} \,.
\end{align*}
Corresponding to the possible choices, we put both $(\pi',I_\mrm{L} \cup I_\mrm{M} \cup I_\mrm{R}',c)$ and $(\pi',I_\mrm{L} \cup I_\mrm{M}' \cup I_\mrm{R}',c)$ into $\mscr{I}'_s$.
We will not add $(\pi',I_\mrm{L} \cup I_\mrm{M} \cup I_\mrm{M}' \cup I_\mrm{R}',c)$ since it is dominated by the other two.
\begin{figure}
	\renewcommand{\rtdlritvl}[7][]{%
	\lrinterval[#1]{#2}{#3}{#4}
	\draw[#1, densely dotted] (#2,#4) -- (#2,#5);
	\draw[#1, densely dotted] (#3,#4) -- (#3,#5);
	\point[#1]{#2}{#5}
	\point[#1]{#3}{#5}
	\draw (#2,#5-0.5) node{#6};
	\draw (#3,#5-0.5) node{#7};
}
	\centering
	\begin{tikzpicture}[scale=0.5]\small
		\draw(-0.5,0) node{$(\pi,I)$};
		\draw[->,thick] (0.5,0)--(13.5,0);
		\point[thick]{1}{0}
		\draw (1,-0.5) node{$m$};
		\point[thick]{3}{0}
		\draw (3,-0.5) node{$n$};
		\point[thick]{5}{0}
		\draw (5,-0.5) node{$p$};
		\point[thick]{11.5}{0}
		\draw (11.5,-0.5) node{$q$};
		\interval[finterval]{2}{4}{0}
		\interval[finterval]{7}{8.5}{0}
		\interval[finterval]{10.5}{12.5}{0}
		\rtdlritvl[color=blue]{5.5}{6.5}{2}{0}{}{}
		\draw (6,2.25) node[color=blue]{$x_1$};
		\rtdlritvl[color=blue]{9}{10}{2}{0}{}{}
		\draw (9.5,2.25) node[color=blue]{$x_2$};
		\rtdlritvl[color=blue]{4.5}{6.4}{1.3}{0}{}{}
		\draw (5.2,1.55) node[color=blue]{$x_3$};
		\rtdlritvl[color=blue]{1.5}{6.3}{0.6}{0}{}{}
		\draw (3.75,0.85) node[color=blue]{$x_4$};
		\rtdlritvl[color=blue]{9.1}{13}{1}{0}{}{}
		\draw (11,1.25) node[color=blue]{$x_5$};
	\end{tikzpicture}
\caption{\label{fig:interval_introduce}%
Introduce Node.
When $I = \{(m,n),(p,p),(p,q)\}$, new intervals like $x_1,x_2,x_3$ respect $I$, while $x_4,x_5$ do not.
If $\pi'$ introduces a new interval $(l_x,r_x)$ so that $p <_{\pi'} l_x <_{\pi'} r_x <_{\pi'} q$, one can assume that $(l_x,r_x)$ is either left to or right to the forbidden interval anchored to $(p,p)$. Those two possibilities are illustrated as $x_1$ and $x_2$. The respective cases give $\{(m,n),(r_x,r_x),(r_x,q)\}$ and $\{(m,n),(p,p),(r_x,q)\}$ as new forbidden interval sets.
}
\end{figure}
\begin{example}
Figure~\ref{fig:interval_introduce} shows several possibilities of extending $\pi$ ($m <_{\pi} n <_{\pi} p <_{\pi} q$) to $\pi'$ by introducing $l_x$ and $r_x$.
Defining $\pi'$ by letting $m <_{\pi'} n <_{\pi'} l_x <_{\pi'} p <_{\pi'} r_x <_{\pi'} q$ is illustrated by the interval of $x_3$,
where $I_\mrm{L}=\{(m,n) \}$, $I_\mrm{R}=\{(p,p),(p,q) \}$, and  $I_\mrm{M}=\varnothing$, in which case the anchors $(p,p),(p,q) \in I_\mrm{R}$ of forbidden intervals will be updated to $(r_x,r_x),(r_x,q)$.
Defining $\pi'$ by letting $m <_{\pi'} n <_{\pi'} p <_{\pi'} l_x <_{\pi'} r_x <_{\pi'} q$ gives $I_\mrm{L}=\{(m,n) \}$, $I_\mrm{R}=\{(p,q) \}$, and  $I_\mrm{M}=\{(p,p)\}$.
We consider two ways to put the interval $(l_x,r_x) $ relative to the forbidden interval anchored to $(p,p) \in I_\mrm{M}$.
One is to put $(l_x,r_x)$ left to those forbidden intervals, like the interval of $x_1$ shows, in which case we update $(p,p) \in I_\mrm{M}$ to $(r_x,r_x)$, so we obtain $I_1 = I_\mrm{L} \cup I_\mrm{M}' \cup I_\mrm{R}' = \{(m,n),(r_x,r_x),(r_x,q)\}$.
The other is to put $(l_x,r_x)$ right to them, as $x_2$ shows, in which case we keep $(p,p) \in I_\mrm{M}$, so we obtain $I_2 = I_\mrm{L} \cup I_\mrm{M} \cup I_\mrm{R}'=\{(m,n),(p,p),(r_x,q)\}$.
We note that there can be several non-intersecting forbidden intervals between $p$ and $\suc{\pi}(p)$, and one may locate $(l_x,r_x)$ between them.
This interpretation gives $I'=\{(m,n),(p,p),(r_x,r_x),(r_x,q)\}$ but we ignore this possibility, due to $I_1,I_2 \sqsubseteq I'$.
\end{example}
We then obtain $\mscr{I}_{s}$ by reducing $\mscr{I}'_{s}$.

\paragraph*{Forget Node:}
\begin{figure*}[t]\centering
	\begin{tikzpicture}[scale=0.5]\small
		\draw(-3,0) node{$(\pi,I)$};
		\fill[red!8] (3,0) -- (4,1) -- (12,1) -- (13,0);
		\rtdlritvl[color=red]{4}{12}{1}{0}{$l_x$}{$r_x$}
		\draw (7.5,0.5) node{$I_\mrm{X}$};
		\draw[->,thick] (-1.5,0)--(16,0);
		\point[thick]{-1}{0}
		\draw (-1,-0.5) node{$p$};
		\point[thick]{0.5}{0}
		\draw (0.5,-0.5) node{$p_0$};
		\point[thick]{4}{0}
		\point[thick]{9}{0}
		\draw (9,-0.5) node{$q_0$};
		\point[thick]{12}{0}
		\point[thick]{14.5}{0}
		\draw (14.5,-0.5) node{$q$};
		\interval[finterval]{-0.5}{1}{0}
		\interval[finterval]{1.5}{2.5}{0}
		\interval[finterval]{3}{5}{0}
		\interval[finterval]{5.75}{7.25}{0}
		\interval[finterval]{8}{10}{0}
		\interval[finterval]{11}{13}{0}
		\interval[finterval]{13.75}{15.25}{0}
		\draw (8,1.25) node[color=red]{$x$};
	\end{tikzpicture}
\caption{\label{fig:intersection_forget}%
Forget Node.
$I=\{(p,p_0),(p_0,p_0),(p_0,l_x),(l_x,l_x),(l_x,q_0),(q_0,r_x),(r_x,q)\}$ will become $I'=\{(p,p_0),(p_0,p_0),(p_0,q_0),(q_0,q)\}$ after $x$ has been forgotten.}
\end{figure*}
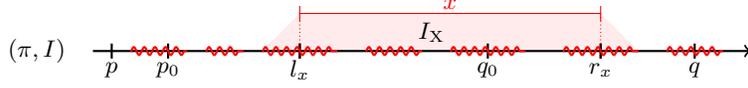
Suppose that $s$ is a forget node. It has just one child $t$ such that $X_t = X_s \cup \{x\}$.
For each $(\pi,I,c)$ in $\mscr{I}_t$, in accordance with the definition of abstractions, we add to $\mscr{I}'_{s}$ the triple $(\pi',I',c)$ where
\begin{itemize}
	\item $\pi'$ is the restriction of $\pi$,
	\item $c' = c + |\{\,\{x,u\} \in E \mid \prl{\pi}{(l_u,r_u)}{(l_x,r_x)}$ and $u \in X_{s} \,\}|$.
\end{itemize}
We make $I'$ from $I$ by
\begin{itemize}
\item making a new forbidden interval involving $(l_x,r_x)$ and
\item re-ahcoring forbidden intervals if they have an anchor $l_x$ or $r_x$.
\end{itemize}
Let us anchor the points $l_x$ and $r_x$ to points $p_0$ and $q_0$ in $X_s$:
\begin{align*}
	p_0 &= \max_{\pi'}\{\,p\in X_s \mid p<_{\pi} l_x \,\}
		= \pred{\pi}(l_x) \,,
\\
	q_0 &= \max_{\pi'}\{\,p\in X_s \mid p <_{\pi} r_x \,\}
		= \begin{cases}
			\pred{\pi}(r_x)	&	\text{if $\pred{\pi}(r_x) \neq l_x$,}
\\			\pred{\pi}(l_x) 	&	\text{otherwise.}
\end{cases}	
\end{align*}
The set $I_X \subseteq I$ of forbidden intervals that intersect with $(l_x,r_x)$ will be given below.
They will be merged into one in $I'$.
\begin{align*}
	I_\mrm{X} &= \{\, (p,q) \in I \mid \its{\pi}{(p_0,r_x)}{(p,q)} \,\}
\\	p_* &= \min_\pi(\{p_0\} \cup \{\, p \mid (p,q) \in I_\mrm{X} \,\})
\\	q_* &= \max_\pi(\{r_x\} \cup \{\, q \mid (p,q) \in I_\mrm{X} \,\})
\end{align*}
Figure~\ref{fig:intersection_forget} may help understanding why we take $(p,q) \in I$ ``intersecting'' with $(p_0,r_x)$ rather than with $(l_x,r_x)$.
This comes from the gap of the meanings of position pairs $(l_x,r_x)$ and $(p,q) \in I$.
While the interval $(l_x,r_x)$ of $x$ starts exactly at $l_x$ and ends at $r_x$, the forbidden interval anchored to $(p,q) \in I$ starts after $p$ and ends after $q$.
Although forbidden intervals anchored to $(p,l_x)$ for some $p \le_\pi l_x$ must intersect with the interval $(l_x,r_x)$ of $x$, we have $\prl{\pi}{(p,l_x)}{(l_x,r_x)}$ according to the definition.
On the other hand, forbidden intervals anchored to $(r_x,q)$ for some $q \ge_\pi r_x$ do not intersect with $(l_x,r_x)$ and $\prl{\pi}{(r_x,q)}{(l_x,r_x)}$ holds.
The new forbidden interval will be 
\begin{align*}
	I' &= ( I-I_\mrm{X} \cup \{(p_*,q_*)\})[q_0/r_x]\,.
\end{align*}
Then we obtain $\mscr{I}_{s}$ by reducing $\mscr{I}'_{s}$.

\paragraph*{Join Node:}
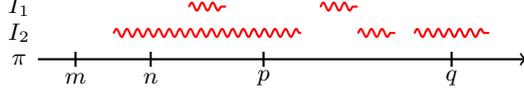
\begin{figure*}[t]\centering
	\begin{tikzpicture}[scale=0.5]\small
		\draw(-2.5,0) node{$\pi$};
		\draw(-2.5,1.4) node{$I_1$};
		\draw(-2.5,0.7) node{$I_2$};
		\draw[->,thick] (-2,0)--(11,0);
		\point[thick]{-1}{0}
		\draw (-1,-0.5) node{$m$};
		\point[thick]{1}{0}
		\draw (1,-0.5) node{$n$};
		\point[thick]{4}{0}
		\draw (4,-0.5) node{$p$};
		\point[thick]{9}{0}
		\draw (9,-0.5) node{$q$};
		\interval[finterval]{2}{3}{1.4}
		\interval[finterval]{5.5}{6.5}{1.4}
		\interval[finterval]{0}{5}{0.7}
		\interval[finterval]{6.5}{7.5}{0.7}
		\interval[finterval]{8}{10}{0.7}
	\end{tikzpicture}
\caption{\label{fig:interval_join}%
Join Node.
$A_1=(\pi,I_1,c_1)$ and $A_2=(\pi,I_2,c_2)$ with $I_1=\{(n,n),(p,p)\}$ and $I_2=\{(m,p),(p,p),(p,q)\}$ are not compatible, which is witnessed by the pair of $(n,n) \in I_1$ and $(m,p) \in I_2$.
If one of them was absent, they were compatible.}
\end{figure*}
Suppose that $s$ has two children $t_1$ and $t_2$, where $X_s = X_{t_1} = X_{t_2}$.
We say that $A_1=(\pi_1,I_1,c_1) \in \mscr{I}_{t_1}$ and $A_2=(\pi_2,I_2,c_2) \in \mscr{I}_{t_2}$ are \emph{compatible} if $\pi_1 = \pi_2$ and there are no $(p_1,q_1)\in I_1$ and $(p_2,q_2) \in I_2$ such that $\its{\pi_1}{(p_1,q_1)}{(p_2,q_2)}$.
Figure~\ref{fig:interval_join} illustrates an incompatible pair.
If $A_1$ and $A_2$ are not compatible, any interval representation $\rho$ on $X_{\le s}$ which extends $\rho_1$ and $\rho_2$ will make two vertices adjacent which are not adjacent in the input graph $G$ for any interval representations $\rho_i$ on $X_{\le t_i}$ of which $A_i$ is the abstraction for $i=1,2$.
For each compatible pair $(A_1,A_2)$, one can find an interval representation $\rho$ on $X_{\le s}$ that forms a subgraph of $(X_{\le s},E_{\le s})$ which extends some $\rho_1$ and $\rho_2$ whose abstractions are $A_1$ and $A_2$, respectively.
We add the triple $(\pi_1,\,I_1 \cup I_2,\,c_1+c_2)$ to $\mscr{I}'_{s}$.
We obtain $\mscr{I}_{s}$ by reducing $\mscr{I}'_{s}$.

\begin{theorem}\label{thm:interval}
The edge deletion problem for interval graphs can be solved in $O(|V| (2k)! \cdot 2^{7.76 k} \mrm{poly}(k))$ time 
for the treewidth $k$ of $G$ and some polynomial function $\mrm{poly}$.
If $k$ is the pathwidth, it can be solved in $O(|V| (2k)! \cdot 2^{3.38 k} \mrm{poly}(k))$ time.
\end{theorem}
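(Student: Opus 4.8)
The plan is to prove two things: that the algorithm is correct and that it meets the stated time bounds, the latter being the substantive part. Correctness I would derive from the fact that Condition~\ref{cond:invariant_i} is preserved at every node. For leaf nodes this is immediate, and for introduce, forget, and join nodes it is precisely what the constructions of this section are built to guarantee, the soundness of the reduction step being supplied by the domination lemma stated above (discarding a dominated abstraction never destroys an achievable extension). Granting the invariant, at the root $s$ we have $X_s=\varnothing$, so $\mscr{I}_s=\{(o,I,c)\}$ where $c$ is the least number of edges whose removal turns $G$ into an interval graph; reading off $c$ then solves \textsc{Interval-Edge-Deletion}. I would present this argument first and then turn to the running time.

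For the running time I would multiply the number of nodes by the cost per node. A nice tree-decomposition of width $k$ has $O(kn)$ nodes, and this extra factor of $k$ is absorbed into $\mrm{poly}(k)$. The cost at a node is controlled by the sizes of the tables $\mscr{I}_s$ and of the pre-reduction sets $\mscr{I}'_s$, so the heart of the proof is a bound on $|\mscr{I}_s|$. Each abstraction is a triple $(\pi,I,c)$. The number of admissible linear orders $\pi$ on $\mathit{LR}_{X_s}$ (those with $\bot,\top$ at the extremes and $l_x<_\pi r_x$ for each active $x$) is at most $(2k+2)!=(2k)!\,\mrm{poly}(k)$, since $|X_s|\le k+1$. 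After reduction, for a fixed $\pi$ we keep at most one triple per distinct set $I$, namely the one of minimum $c$, so it remains to count the admissible $I$. Because distinct forbidden intervals never intersect (they arise from intersection closures), $I$ is a system of pairwise non-overlapping intervals whose endpoints lie among the $\le 2k+4$ points of $\mathit{LR}_{X_s}$. Counting such systems by a transfer-matrix/generating-function argument over the points gives a bound of $2^{3.38k}\mrm{poly}(k)$ per order, hence $|\mscr{I}_s|\le (2k)!\,2^{3.38k}\,\mrm{poly}(k)$.

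Given these bounds, the per-node work is routine. At an introduce node each child triple spawns $O(k^2)$ orders $\pi'$ and at most two successors, each built in $\mrm{poly}(k)$ time, so $|\mscr{I}'_s|\le|\mscr{I}_t|\,\mrm{poly}(k)$; a forget node produces one successor per triple; and the reduction, performed separately within each $\pi$-bucket, stays within the same asymptotic size. Since a path-decomposition has no join nodes, the whole computation costs $O(kn)\cdot(2k)!\,2^{3.38k}\,\mrm{poly}(k)=O(|V|(2k)!\,2^{3.38k}\,\mrm{poly}(k))$, which is the pathwidth bound. At a join node, however, I must pair each triple of $\mscr{I}_{t_1}$ with each \emph{compatible} triple of $\mscr{I}_{t_2}$ sharing the same $\pi$, test compatibility, and form $(\pi,I_1\cup I_2,c_1+c_2)$ in $\mrm{poly}(k)$ time. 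Bounding the number of such compatible pairs per order, via a joint transfer-matrix count over the two interacting interval systems, gives $2^{7.76k}\mrm{poly}(k)$; multiplying by $(2k)!$ orders and $O(kn)$ nodes yields the treewidth bound.

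The main obstacle will be the combinatorial counting that pins down the exponents $3.38$ and $7.76$: one must enumerate, for a fixed order $\pi$, the reduced systems of non-overlapping forbidden intervals and, for the join, the compatible pairs of such systems, and verify that the relevant transfer matrices have exactly the claimed growth rates. A secondary point that must be handled with care is that the reduction be implemented so that it never costs more than a polynomial factor times the bound on the reduced table, e.g.\ by bucketing on $\pi$ and comparing only within a bucket; otherwise a naive quadratic reduction at introduce nodes could inflate the pathwidth exponent. Confirming that the join's pairwise combination is genuinely the bottleneck, and that after accounting for incompatibility it does not exceed $2^{7.76k}$ per order, is where the bulk of the calculation lies.
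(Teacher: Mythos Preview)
Your overall architecture—invariant preservation for correctness, then bounding table sizes for the running time—matches the paper, but you have misallocated the $2^{-k}$ savings in a way that leaves two intermediate claims unjustified. The per-order count of admissible forbidden-interval sets $I$ is not $2^{3.38k}$: the paper bounds it by a two-state recurrence (last interval right-bounded versus right-unbounded) with transfer matrix $\bigl(\begin{smallmatrix}2&2\\2&3\end{smallmatrix}\bigr)$, whose dominant eigenvalue $(5+\sqrt{17})/2\approx 2^{2.19}$ over the $2k$ endpoints yields $N\in O(2^{4.38k})$. The $2^{-k}$ you are effectively claiming here instead belongs to the count of orders $\pi$. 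You yourself list the constraint $l_x<_\pi r_x$ for every active $x$ but then discard it by bounding crudely by $(2k+2)!$; exploiting it gives $(2k)!/2^{k}$ orders. Combining $(2k)!/2^{k}$ with $N=2^{4.38k}$ delivers the $(2k)!\,2^{3.38k}$ table bound and hence the pathwidth claim.

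For the treewidth bound the paper makes no attempt at a refined ``joint transfer-matrix'' count of compatible pairs at a join node. It simply pays $N^{2}=2^{8.76k}$ pairs per order, and again the $(2k)!/2^{k}$ count of orders supplies the reduction to $(2k)!\,2^{7.76k}$. Your proposed compatible-pairs analysis might be feasible, but it is not what the paper does and becomes unnecessary once the $2^{-k}$ is placed correctly. In short: move the $2^{-k}$ from the interval-set count (where your transfer-matrix argument will not produce it) to the orders count (where it follows immediately from the $k$ constraints $l_x<_\pi r_x$), and both exponents fall out exactly as stated without any additional cleverness at the join.
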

\begin{proof}
Let $k$ be the maximum size of the assigned set $X_s$ to a node of a nice tree-decomposition.
We first estimate the number $N$ of possible forbidden interval sets $I$ for a fixed $\pi$ in $(\pi,I,c) \in \mscr{I}_s$.
Recall that $I$ must satisfy that
\begin{itemize}
\item $(p,q) \in I$ implies $p \le_{\pi} q$,
\item $(p,q_1),(p,q_2) \in I$ and $p \neq q_1,q_2$ implies $q_1=q_2$, 
\item $(p_1,q),(p_2,q) \in I$ and $q \neq p_1,p_2$ implies $p_1=p_2$.
\end{itemize}
That is, there are at most three intervals in $I$ that involves each $p \in \mathit{LR}_{X_s}-\{\top\}$:
\begin{enumerate}
	\item ending a forbidden interval started earlier: $(q,p) \in I$ for some $q <_{\pi} p$,
	\item starting and ending a minimal forbidden interval: $(p,p) \in I$,
	\item starting a new forbidden interval: $(p,q) \in I$ for some $q >_{\pi} p$.
\end{enumerate}
Those possibilities are neither mutually exclusive nor independent.
To count the number of possible forbidden intervals inductively, hereafter we will also count interval sets that may contain a right-unbounded interval.
Let $A_\mrm{b}(n)$ and $A_\mrm{u}(n)$ be the numbers of possible ways of drawing forbidden intervals with $n$ points in addition to $\bot$ where the last intervals are right-bounded and right-unbounded, respectively.
Solving the recurrence equations
\begin{align*}
	A_{\mrm{b}}(n+1) &= 2 A_\mrm{b}(n) + 2 A_\mrm{u}(n)\,, & A_{\mrm{b}}(0)=2\,, 
\\	A_{\mrm{u}}(n+1) &= 2 A_\mrm{b}(n) + 3 A_\mrm{u}(n)\,, & A_{\mrm{u}}(0)=1\,,
\end{align*}
we obtain $A_{\mrm{b}}(n) \in O(2^{2.19 n})$.
Therefore, since we have $2k$ points, there are at most $N \in O(2^{4.38 k})$ possible sets of forbidden intervals.
There can be at most $(2k)!/2^k$ varieties of $\pi$.
Recall that $\mscr{I}_s$ is reduced, where if $\mscr{I}_s$ has two elements of the form $ (\pi,I,c)$ and $(\pi,I,c')$, then $c=c'$.
We conclude that each $\mscr{I}_s$ may contain at most $(2k)!/2^k N \in O(2^{3.38 k}(2k)!)$ elements.

Suppose $s$ is a forget node, whose child is $t$.
Then, from each element $(\pi,I,c) \in \mscr{I}_t$, we calculate just one element $(\pi',I',c') \in \mscr{I}'_s$.
Therefore, the calculation can be done in $O(N (2k)!/2^k \mrm{Poly}(k))$ time.

Suppose $s$ is an introduce node, whose child is $t$.
Then, each element $(\pi',I',c') \in \mscr{I}_s$ is derived from a unique element $(\pi,I,c) \in \mscr{I}_t$.
Therefore, the calculation can be done in $O(N (2k)!/2^k \mrm{Poly}(k))$ time.

Suppose $s$ is a join node with children $t_1$ and $t_2$.
Recall that $(\pi_1,I_1,c_1) \in \mscr{I}_{t_1}$ and $(\pi_2,I_2,c_2) \in \mscr{I}_{t_2}$ are compatible only when $\pi_1=\pi_2$.
Checking the compatibility of $(\pi,I_1,c_1)$ and $(\pi,I_2,c_2)$ and computing their ``join'' $(\pi,I_1 \cup I_2, c_1+c_2)$ takes $\mrm{Poly}(k)$ time.
Since we have at most $N^2$ pairs to examine for each $\pi$, it takes $O((2k)!/2^k N^2 \mrm{Poly}(k))$ time.

Since the nice tree-decomposition has $O(|V|)$ nodes, we obtain the conclusion.
\end{proof}
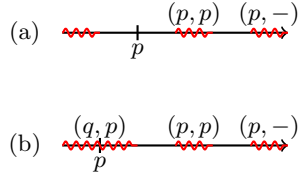
\begin{figure*}[t]\centering
	\begin{tikzpicture}[scale=0.5]\small
		\draw(-1,0) node{(a)};
		\draw[->,thick] (0,0)--(6,0);
		\point[thick]{2}{0}
		\draw (2,-0.5) node{$p$};
		\interval[finterval]{0}{1}{0}
		\interval[finterval]{3}{4}{0}
		\draw (3.5,0.5) node{$(p,p)$};
		\interval[finterval]{5}{6}{0}
		\draw (5.5,0.5) node{$(p,-)$};
\begin{scope}[shift={(0,-3)}]
		\draw(-1,0) node{(b)};
		\draw[->,thick] (0,0)--(6,0);
		\point[thick]{1}{0}
		\interval[finterval]{0}{2}{0}
		\draw (1,-0.5) node{$p$};
		\draw (1,0.5) node{$(q,p)$};
		\interval[finterval]{3}{4}{0}
		\draw (3.5,0.5) node{$(p,p)$};
		\interval[finterval]{5}{6}{0}
		\draw (5.5,0.5) node{$(p,-)$};
\end{scope}
	\end{tikzpicture}
\caption{\label{fig:interval_complexity}%
Possible ways to extend an interval set over $n$ points to one over $n+1$ points.
(a) If the last interval is right-bounded, there are four ways to add $p$ to the set.
(b) If the last interval is right-unbounded, there are five ways to add $p$ to the set.
}
\end{figure*}

\section{Finding a Largest Permutation Subgraph}
 A \emph{permutation representation} over a set $X$ is a pair $\pi=(\pi_1,\pi_2)$ of linear orders on $X^+ = X \cup \{\top,\bot\}$ such that $\bot <_{\pi_i} x <_{\pi_i} \top$ for all $x \in X$ and $i \in \{1,2\}$.
For two points $u$ and $v$ in $X^+$, we write $\prl{\pi}{u}{v}$ if $\pi_1$ and $\pi_2$ agree on the order of $u$ and $v$: that is,
either $u \le_{\pi_1} v$ and $u \le_{\pi_2} v$ or $v <_{\pi_1} u$ and $v <_{\pi_2} u$.
Otherwise, we write $\its{\pi}{u}{v}$ and say that $u$ and $v$ \emph{intersect} in $\pi$.
 The \emph{permutation graph $G_\pi$ of a permutation representation $\pi$ on $V$} is $(V,\mcal{E}_\pi)$ where 
 \[
 \mcal{E}_\pi = \{\,\{u,v\} \subseteq V \mid \text{$u$ and $v$ intersect in $\pi$} \,\}
 \,.\]
 Figure~\ref{fig:perm} (a) shows an example of a permutation representation $\rho$ and (b) shows the induced permutation graph $G_\rho$.
 For each vertex $u$ of $V$, we put a point on each of the two parallel lines respecting the linear orders $\pi_1$ and $\pi_2$ and draw a line, called the $u$-line, between the points.
 Then we make an edge between $u$ and $v$ if and only if the $u$-line and the $v$-line intersect.
 This section gives an FPT algorithm for the edge deletion problem for permutation graphs.
 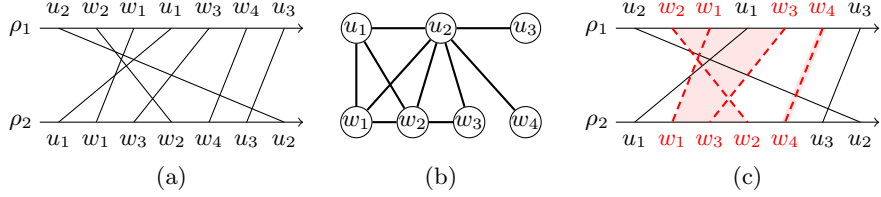
\begin{figure}
 \centering
 	\centering
 	\begin{tikzpicture}[scale=0.5]\small
 		\draw(4, -1) node{(a)};
 		\draw[->](0.5,3)--(7.5,3);
 		\draw[->](0.5,0.5)--(7.5,0.5);
 		\draw(1, 3.5) node{$u_2$};
 		\draw(2, 3.5) node{$w_2$};
 		\draw(3, 3.5) node{$w_1$};
 		\draw(4, 3.5) node{$u_1$};
 		\draw(5, 3.5) node{$w_3$};
 		\draw(6, 3.5) node{$w_4$};
 		\draw(7, 3.5) node{$u_3$};
 		\draw(1, 0) node{$u_1$};
 		\draw(2, 0) node{$w_1$};
 		\draw(3, 0) node{$w_3$};
 		\draw(4, 0) node{$w_2$};
 		\draw(5, 0) node{$w_4$};
 		\draw(6, 0) node{$u_3$};
 		\draw(7, 0) node{$u_2$};
 		\draw[-](1, 3)--(7,0.5);
 		\draw[-](2, 3)--(4,0.5);
 		\draw[-](3, 3)--(2,0.5);
 		\draw[-](4, 3)--(1,0.5);
 		\draw[-](5, 3)--(3,0.5);
 		\draw[-](6, 3)--(5,0.5);
 		\draw[-](7, 3)--(6,0.5);
 		\draw(0, 3) node{$\rho_1$};
 		\draw(0, 0.5) node{$\rho_2$};
 	\end{tikzpicture}
 	\quad
 	\begin{tikzpicture}[scale=0.5, every node/.style=circle,fill=white,inner sep=0pt,minimum size=4mm]\small
 	\draw(2.25, -1) node{(b)};
 	\node[draw,fill] (u2) at (0,3) {$u_1$};
 	\node[draw,fill] (u1) at (2.25,3) {$u_2$};
 	\node[draw,fill] (u3) at (4.5,3) {$u_3$};
 	\node[draw,fill] (w1) at (0,0.5) {$w_1$};
 	\node[draw,fill] (w2) at (1.5,0.5) {$w_2$};
 	\node[draw,fill] (w3) at (3,0.5) {$w_3$};
 	\node[draw,fill] (w4) at (4.5,0.5) {$w_4$};
 	\draw[thick,-] (w1) to (w2);
 	\draw[thick,-] (w1) to (u1);
 	\draw[thick,-] (w1) to (u2);
 	\draw[thick,-] (w2) to (u1);
 	\draw[thick,-] (w2) to (u2);
 	\draw[thick,-] (w2) to (w3);
 	\draw[thick,-] (u1) to (u2);
 	\draw[thick,-] (u1) to (u3);
 	\draw[thick,-] (u1) to (w3);
 	\draw[thick,-] (u1) to (w4);
 	\end{tikzpicture}
 	\quad
 	\begin{tikzpicture}[scale=0.5]\small
 		\draw(4, -1) node{(c)};
		\fill[red!10] (2,3) -- (3.5,1.125) -- (5,3);
		\fill[red!10] (2,0.5) -- (2.667,2.167) -- (4,0.5);
		\fill[red!10] (5.9,3) -- (4.9,0.5) -- (5.1,0.5) -- (6.1,3);
 		\draw[->](0.5,3)--(7.5,3);
 		\draw[->](0.5,0.5)--(7.5,0.5);
 		\draw(1, 3.5) node{$u_2$};
 		\draw(2, 3.5) node[forgotten]{$w_2$};
 		\draw(3, 3.5) node[forgotten]{$w_1$};
 		\draw(4, 3.5) node{$u_1$};
 		\draw(5, 3.5) node[forgotten]{$w_3$};
 		\draw(6, 3.5) node[forgotten]{$w_4$};
 		\draw(7, 3.5) node{$u_3$};
 		\draw(1, 0) node{$u_1$};
 		\draw(2, 0) node[forgotten]{$w_1$};
 		\draw(3, 0) node[forgotten]{$w_3$};
 		\draw(4, 0) node[forgotten]{$w_2$};
 		\draw(5, 0) node[forgotten]{$w_4$};
 		\draw(6, 0) node{$u_3$};
 		\draw(7, 0) node{$u_2$};
 		\draw[-](1, 3)--(7,0.5);
 		\draw[forgotten](2, 3)--(4,0.5);
 		\draw[forgotten](3, 3)--(2,0.5);
 		\draw[-](4, 3)--(1,0.5);
 		\draw[forgotten](5, 3)--(3,0.5);
 		\draw[forgotten](6, 3)--(5,0.5);
 		\draw[-](7, 3)--(6,0.5);
 		\draw(0, 3) node{$\rho_1$};
 		\draw(0, 0.5) node{$\rho_2$};
 	\end{tikzpicture}
 \caption{\label{fig:perm} (a) Permutation representation $\rho$. (b) Permutation graph $G_\rho$. (c) Illustration of the abstraction $\mscr{A}(\rho,s)=(\pi,I,c)$  where $X_s=\{u_1,u_2,u_3\}$ and $X_{\le s}-X_s = \{w_1,w_2,w_3,w_4\}$. The forbidden areas are represented by $I= \{(u_2,u_1,u_1,u_1),(u_1,u_1,u_1,u_1)\}$, where the intersection closure $[w_1]_\rho^s=\{w_1,w_2,w_3\}$ is anchored to $(u_2,u_1,u_1,u_1)$ and $[w_4]_\rho^s=\{w_4\}$ is anchored to $(u_1,u_1,u_1,u_1)$ in $\mscr{A}(\rho,s)$.}
 \end{figure}

\subsection{Algorithm Invariant}
We anchor the areas bordered with the lines of forgotten vertices $X_{\le s} - X_s$ to the points of the current bag $X_s$.
For each $w \in X_{\le s} - X_s$, let the \emph{intersection closure of $w$ (w.r.t.\ $\rho$ and $s$)} be the smallest set $[w]_\rho^s$ such that 
\begin{itemize}
\item $w \in [w]_\rho^s$ and
\item if $w_1 \in [w]_\rho^s$, $w_2 \in X_{\le s} - X_s$, and $\its{\rho}{w_1}{w_2}$, then $w_2\in [w]_\rho^s$.
\end{itemize}
Each closure forms a forbidden area in the sense that we may introduce no new lines that intersect the area.

For each permutation representation $\rho = (\rho_1,\rho_2)$ over $X_{\le s}$, we define the \emph{abstraction $\mscr{A}(\rho,s) = ((\pi_1,\pi_2),I,c)$ of $\rho$ for $s$} as follows: 
\begin{itemize}
	\item $\pi_1$ and $\pi_2$ are the restrictions of $\rho_1$ and $\rho_2$ to $X_s$, respectively,
	\item $I = \{\, (p_1,q_1,p_2,q_2) \in (X_s^+)^4 \mid $ there is $w \in X_{\le s} - X_s$ s.t.\ for $i \in \{1,2\}$,\\ \hfill$p_i <_{\rho_i} \min_{\rho_i}[w]_{\rho}^s <_{\rho_i} \suc{\pi_i}(p_i)$ and $q_i <_{\rho_i} \max_{\rho_i}[w]_{\rho}^s <_{\rho_i} \suc{\pi_i}(q_i)\,\}$,
	\item $c = |E_{\le s} - \mcal{E}_\rho - E_s| 
	=|\{\, \{u,v\} \in E_{\le s} \mid \text{$\{u,v\} \nsubseteq X_s$ and $\its{\rho}{u}{v}$} \,\}|$.
\end{itemize}
Intuitively, $I$ anchors occurrences of forgotten vertices in $\rho$ to occurrences of active vertices in $\pi=(\pi_1,\pi_2)$.
Figure~\ref{fig:perm} (c) illustrates an example of forbidden areas.
We note that, two distinct elements $(p_1,q_1,p_2,q_2) \in I$ and $(p_1',q_1',p_2',q_2') \in I$ cannot ``intersect'', due to the definition of intersection closures.
In other words, there are no $(p_1,q_1,p_2,q_2) , (p_1',q_1',p_2',q_2') \in I$ such that $p_i <_{\pi_i} q_i'$ and $p_{j}' <_{\pi_j} q_j$ for some $i,j \in \{1,2\}$.
Elements of $I$ will be linearly ordered by extending $\pi$ so that $(p_1,q_1,p_2,q_2) <_\pi (p_1',q_1',p_2',q_2') $ if and only if $p_1 < q_1'$ or $p_2 < q_2'$.
The integer $c$ is the number of the edges deleted from $(X_{\le s}, E_{\le s})$ except those among active vertices.

We say that $(\pi,I,c)$ \emph{dominates} $(\pi',I',c')$ if and only if $\pi=\pi'$, $I \sqsubseteq_\pi I'$ and $c \le c'$,
where we write $I \sqsubseteq_\pi I'$ if every forbidden area of $I$ is inside some forbidden area of $I'$:
i.e., for every $(p_1,q_1,p_2,q_2) \in I$, there is $(p_1',q_1',p_2',q_2') \in I'$ such that
 $p_1' \le_{\pi_1} p_1$, $q_1 \le_{\pi_1} q_1'$, $p_2' \le_{\pi_2} p_2$, and $q_2 \le_{\pi_2} q_2'$.
\begin{lemma}
Suppose $\mscr{A}(\rho,s)=(\pi,I,c)$ dominates $\mscr{A}(\rho',s)=(\pi',I',c')$.
For any extension $\sigma'$ of $\rho'$ such that $\mcal{E}_{\sigma'} \subseteq E_{\le s}$, there is an extension $\sigma$ of $\rho$ such that $\mcal{E}_{\sigma'} \subseteq \mcal{E}_{\sigma} \subseteq E_{\le s}$.
\end{lemma}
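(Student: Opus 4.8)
The plan is to build $\sigma$ from $\sigma'$ by keeping the forgotten part of $\rho$ intact while transplanting, without changing their positions relative to the active vertices, exactly those vertices of $\sigma'$ that lie outside $X_{\le s}$. Write $W$ for this set of ``future'' vertices, so that $\sigma'=(\sigma_1',\sigma_2')$ is a permutation representation over $X_{\le s}\cup W$ whose restriction to $X_{\le s}$ is $\rho'$. The starting observation is a property of tree-decompositions: every forgotten vertex $v\in X_{\le s}-X_s$ has all of its $G$-neighbours inside $X_{\le s}$, because the nodes whose bag contains $v$ form a connected subtree that avoids $s$ (as $v\notin X_s$) and hence lies entirely among the descendants of $s$. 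Consequently no edge of $G$ joins a forgotten vertex to a vertex of $W$, so in the valid extension $\sigma'$ the line of each $w\in W$ must be parallel to every forgotten closure; equivalently, each $w$ is in the parallel relation with every forbidden area of $I'$.

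First I would define $\sigma=(\sigma_1,\sigma_2)$ as the extension of $\rho$ that inserts each $w\in W$ into $\rho_1$ and $\rho_2$ in the gap between consecutive active points dictated by the position of $w$ relative to $X_s$ in $\sigma'$, preserving verbatim the mutual order of the vertices of $W$ and of $W$ against $X_s$; inside a gap, $w$ is placed to the same side of the forgotten material as it was placed relative to the corresponding forbidden area in $\sigma'$. Because $\pi$ is common to the two abstractions, ``the same position relative to $X_s$'' is well defined in both coordinates, and this prescription fixes $\sigma_1,\sigma_2$ up to the order among forgotten vertices, which we inherit from $\rho$.

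Then I would verify three points. (i) \emph{Validity:} $\sigma$ is a genuine permutation representation extending $\rho$, which is immediate. (ii) \emph{No forbidden intersection is created:} since $I\sqsubseteq_\pi I'$, every forbidden area of $I$ sits inside one of $I'$; as each $w$ was parallel to all of $I'$ and was transplanted to the same relative location, it remains parallel to all of $I$, so $\sigma$ introduces no intersection between a future and a forgotten vertex, and $\mcal{E}_\sigma\subseteq E_{\le s}$ holds in the same sense as for $\sigma'$. Here I would invoke the fact recorded in the excerpt that distinct forbidden areas cannot intersect, which makes the gap-by-gap placement mutually consistent. (iii) \emph{Edge containment:} every pair with both endpoints in $X_s\cup W$ has the same intersection status in $\sigma$ and $\sigma'$, since their relative orders in $\sigma_1$ and $\sigma_2$ were copied exactly; the pairs meeting a forgotten vertex contribute only edges inside $X_{\le s}$, which are frozen by the choice of $\rho$ and whose deletion count is controlled by $c\le c'$. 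Combining these yields $\mcal{E}_{\sigma'}\subseteq\mcal{E}_\sigma\subseteq E_{\le s}$.

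The main obstacle is step (ii) carried out simultaneously in the two orders: unlike the interval (one-dimensional) case, ``avoiding a forbidden area'' is here a two-dimensional condition, so I must argue that a placement of $w$ parallel to a forbidden area $(p_1',q_1',p_2',q_2')\in I'$ in both coordinates stays parallel to the contained area $(p_1,q_1,p_2,q_2)\in I$, using $p_1'\le_{\pi_1}p_1$, $q_1\le_{\pi_1}q_1'$ and the analogous inequalities in the second order. The delicate part is to show that these containments, together with the non-crossing of forbidden areas, leave enough room to interleave all of $W$ consistently across $\sigma_1$ and $\sigma_2$; this is precisely the two-order analogue of the re-anchoring argument used in the interval case, and it is where most of the bookkeeping lives.
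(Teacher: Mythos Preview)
The paper states this lemma without proof, so there is nothing to compare your argument against; I can only assess it on its own merits.

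Your construction of $\sigma$ and the verification in parts (i) and (ii) are exactly the right idea, and the ``main obstacle'' you isolate is genuine but surmountable for the reasons you sketch: since distinct forbidden areas of $I$ are pairwise non-intersecting and each is contained in a forbidden area of $I'$, placing every $w\in W$ on the same side of the $I'$-area it avoided in $\sigma'$ automatically puts it on a consistent side of every $I$-area, in both coordinates simultaneously.

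The gap is in part (iii). You write that pairs meeting a forgotten vertex ``are frozen by the choice of $\rho$ and whose deletion count is controlled by $c\le c'$,'' but this does \emph{not} establish the set-inclusion $\mcal{E}_{\sigma'}\subseteq\mcal{E}_{\sigma}$ that the lemma literally asserts. Indeed that inclusion is false in general: take two forgotten vertices $w_1,w_2\in X_{\le s}-X_s$ with $\{w_1,w_2\}\in E$; one can have $\its{\rho'}{w_1}{w_2}$ while $\prl{\rho}{w_1}{w_2}$, so $\{w_1,w_2\}\in\mcal{E}_{\sigma'}\setminus\mcal{E}_{\sigma}$. What you have actually proved (and what the algorithm needs) is the weaker, cost-based statement: your $\sigma$ satisfies $\mcal{E}_\sigma\subseteq E$ and, because the edges on $X_s\cup W$ agree while the deletions involving a forgotten vertex are counted by $c\le c'$, one gets $|E-\mcal{E}_\sigma|\le|E-\mcal{E}_{\sigma'}|$. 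You should make explicit that you are proving this corrected formulation rather than the literal containment, and separate the cardinality argument from the (false) set-theoretic one; as written, the sentence ``Combining these yields $\mcal{E}_{\sigma'}\subseteq\mcal{E}_\sigma\subseteq E_{\le s}$'' is not supported by what precedes it.
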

We call a set of abstractions \emph{reduced} if it has no pair of distinct elements of which one dominates the other.
The reduced form of a set of abstractions is obtained by removing abstractions that are dominated by others.
Our algorithm for \textsc{Permutation-Edge-Deletion} calculates, for each node $s$ of the tree-decomposition, a reduced set $\mscr{I}_s$ of abstractions of permutation representations of permutation subgraphs of $(X_{\le s},E_{\le s})$ for $X_s$ satisfying the following invariant properties.
\begin{condition}\label{cond:invariant_p}{\ }
\begin{itemize}
 \item Every element of $\mscr{I}_s$ is the abstraction of some permutation representation of a permutation subgraph of $(X_{\le s}, E_{\le s})$ for $X_s$,
 \item Any permutation representation $\rho$ of any permutation subgraph of $(X_{\le s}, E_{\le s})$ has an element of $\mscr{I}_s$ that dominates its abstraction $\mscr{A}(\rho,s)$.
 \end{itemize}
\end{condition}
 Clearly $\mscr{I}_s$ satisfies the above condition if and only if so is its reduced form.
 We make each $\mscr{I}_s$ reduced.
 Particularly if $s$ is the root node, we have $\mscr{I}_s=\{((o,o), \{(\bot,\bot,\bot,\bot)\},c)\}$, where $o$ is the trivial order over $\{\bot,\top\}$ such that $\bot <_o \top$ and $c$ is the least number such that one can obtain a permutation subgraph by removing $c$ edges from $G$.
 That is, the number $c$ will be the solution to our problem.
 If $s$ is a leaf, by definition $\mscr{I}_s = \{((o,o),\emptyset,0)\}$.
 Our algorithm computes those values $\mscr{I}_s$ recursively from leaves to the root.
 In what follows, we show how to calculate $\mscr{I}_{s}$ from $\mscr{I}_t$ for child(ren) $t$ of $s$, while preserving the invariant (Condition~\ref{cond:invariant_p}).

\begin{figure}\centering
\newcommand{\forbiddenarea}[4]{%
	\fill[red!10] (#1,3) -- (#3,3) -- (#4,0) -- (#2,0);
	\draw[forgotten](#1, 3)--(#2,0);
	\draw[forgotten](#3, 3)--(#4,0);
	\draw[-](#1,3)--(#3,3);
	\draw[-](#2,0)--(#4,0);
}
	\begin{tikzpicture}[scale=0.5]\small
 		\draw(4, -1.5) node{(a)};
		\forbiddenarea{2.75}{3.25}{3.5}{6.5}
		\draw(-0.5, 0) node{};
		\draw[->](0,0)--(8.0,0);
		\draw[->](0,3)--(8.0,3);
		\point{1}{3}
		\point{5.0}{3}
		\draw(1, 3.5) node{$p_1$};
		\draw(5.0, 3.5) node{$u$};
		\point{2.0}{0}
		\point{4.0}{0}
		\point{6.0}{0}
		\draw(2.0, -0.5) node{$p_2$};
		\draw(4.0, -0.5) node{$v$};
		\draw(6.0, -0.5) node{$q_2$};
		\draw[focused](2.0, 3)--(3.0,0);
		\draw(2.0, 3.5) node[focused]{$x_1$};
		\draw(3.0, -0.5) node[focused]{$x_1$};
		\draw[focused](4.0, 3)--(7,0);
		\draw(4.0, 3.5) node[focused]{$x_2$};
		\draw(7.0, -0.5) node[focused]{$x_2$};
		\draw[focused](6.0, 3)--(1,0);
		\draw(6.0, 3.5) node[focused]{$x_3$};
		\draw(1, -0.5) node[focused]{$x_3$};
		\draw[focused](7.0, 3)--(5,0);
		\draw(7.0, 3.5) node[focused]{$x_4$};
		\draw(5, -0.5) node[focused]{$x_4$};
	\end{tikzpicture}
\quad\quad
	\begin{tikzpicture}[scale=0.5]\small
 		\draw(5.5, -1.5) node{(b)};
		\draw[->](0,0)--(11,0);
		\draw[->](0,3)--(11,3);
		\forbiddenarea{1}{1}{3}{1.5}
		\forbiddenarea{3.6}{3.6}{4.4}{4.4}
		\forbiddenarea{5.6}{5.6}{6.4}{6.4}
		\forbiddenarea{9}{7}{10}{10}
		\draw(1.5, 1.5) node{$I_\mrm{L}$};
		\draw(4, 1.5) node{$I_\mrm{M}$};
		\draw(6, 1.5) node{$I_\mrm{M}$};
		\draw(9, 1.5) node{$I_\mrm{R}$};
		\point{2}{3}
		\point{8}{3}
		\point{9.5}{3}
		\point{2}{0}
		\point{8}{0}
		\point{9.5}{0}
		\draw(2, 3.5) node{$y_1$};
		\draw(8, 3.5) node{$z_1$};
		\draw(2, -0.5) node{$y_2$};
		\draw(8, -0.5) node{$z_2$};
		\draw[focused](5, 3)--(5, 0);
		\draw(5, 3.5) node[focused]{$x$};
		\draw(5, -0.5) node[focused]{$x$};
		\draw[forgotten](9, 3)--(7,0);
		\draw[forgotten](10, 3)--(10,-0);
	\end{tikzpicture}
\caption{\label{fig:permalgo1}Introduce Node. (a) Extending $\pi$ by introducing $x_1$ or $x_2$ respects $I$ and introducing $x_3$ or $x_4$ does not. We see $\suc{\pi'_1}(p_1) <_{\pi'_1} x_3 <_{\pi'_2} q_2 $ and $\suc{\pi'_2}(p_2) <_{\pi'_2} x_4 <_{\pi'_2} q_2 $.
(b) Relation between newly introduced vertex $x$ and forbidden areas. Since forbidden areas should not intersect with $x$, they are grouped into three: ones ($I_\mrm{L}$) that must be left to $x$, ones ($I_\mrm{R}$) that must be right to $x$, and the other ($I_\mrm{M}\subseteq\{(y_1,y_1,y_1,y_1)\}$).}
\end{figure}

\subsection{Algorithm}
\paragraph*{Leaf Node:} If $s$ is a leaf, we let $\mscr{I}_s = \{((o,o),\emptyset,0)\}$.

\paragraph*{Introduce Node:}
Suppose $s$ is an introduce node. It has just one child $t$ such that $X_{s} = X_t \cup \{x\}$.
Figure~\ref{fig:permalgo1} would help to understand the behavior of our algorithm in this case.
For each $(\pi,I,c)$ in $\mscr{I}_t$, we say that an extension $\pi'$ of $\pi$ to $X_s$ \emph{respects} $E$ and $I$ precisely in the cases where
\begin{itemize}
		\item if $\{x,u\} \notin E$ for $u \in X_{t}$, then $\prl{\pi'}{u}{x}$,
		\item there are no $(p_1,q_1,p_2,q_2) \in I$ and $i,j \in \{1,2\}$ such that
 			$\suc{\pi'_i}(p_i) <_{\pi'_i} x$ and $x <_{\pi'_j} q_j$, 
	\end{itemize}
respectively.
If $\pi'$ does not respect $E$, we would wrongly make two non-adjacent vertices $u \in X_s$ and $x$ of $G$ intersect in $\pi'$, which gives a non-subgraph of the input graph.
Otherwise, they will be non-adjacent.
Since $\{x,w\} \notin E$ for any forgotten vertex $w \in X_{\le s} - X_s$, we must avoid $x$ and $w$ intersect in an extension of $\rho$.
This requires $\pi'$ to respect $I$.
If $\pi'$ respects $I$, there are $\rho$ such that $\mscr{A}(\rho,t)=(I,\pi,c)$ and an extension $\rho'$ of both $\rho$ and $\pi'$ such that $\mcal{E}_{\rho'} \subseteq E_{\le s}$.
We note that here we use $\suc{\pi'_i}(p_i) $ rather than $p_i$ to judge whether $\pi'$ respects $I$.
This is because the quadruple $(p_1,q_1,p_2,q_2) \in I$ consists of anchors rather than the exact points of the forbidden area.

For each extension $\pi'$ respecting $E$ and $I$, we add at most two triples to $\mscr{I}'_s$ as described below.
Since $X_s \supseteq X_t$, $c$ need not be updated by definition.

Some forbidden areas that had an anchor $y_i=\pred{\pi_i'}(x)$ may be re-anchored to $x$.
Forbidden areas are partitioned into three, where the ones in $I_\mrm{L}$ and $I_\mrm{R}$ are certainly left to and right to $x$, respectively, while $(y_1,y_1,y_2,y_2)$ can go to the left or right to $x$: 
\begin{align*}
	I_\mrm{L} &= \{\, (p_1,q_1,p_2,q_2) \in I \mid p_1 <_{\pi_1} y_1 \text{ or } p_2 <_{\pi_2} y_2 \,\} \,,
\\	I_\mrm{R} &= \{\, (p_1,q_1,p_2,q_2) \in I \mid y_1 <_{\pi_1} q_1 \text{ or } y_2 <_{\pi_2} q_2\,\} \,,
\\	I_\mrm{M} &= \{\, (p_1,q_1,p_2,q_2) \in I \mid p_1=y_1=q_1 \text{ and } p_2=y_2=q_2 \,\} \,.
\end{align*}
Note that $I_\mrm{M}$ is either empty or a singleton and that $I_\mrm{L}$ and $I_\mrm{R}$ are mutually exclusive, since $\pi'$ respects $I$.
If a forbidden area has a point anchored to $y_i$ and is right to $x$, it will be re-anchored:
\begin{align*}
	I_\mrm{R}' &= \{\, (p_1[x/y_1],q_1[x/y_1],p_2[x/y_2],q_2[x/y_2]) \mid (p_1,q_1,p_2,q_2) \in I_\mrm{R} \,\} \,,
\\	I_\mrm{M}' &= \{\, (p_1[x/y_1],q_1[x/y_1],p_2[x/y_2],q_2[x/y_2]) \mid (p_1,q_1,p_2,q_2) \in I_\mrm{M} \,\} \,.
\end{align*}
Corresponding to the possible choices, we put both $(\pi',I_\mrm{L} \cup I_\mrm{M} \cup I_\mrm{R}',c)$ and $(\pi',I_\mrm{L} \cup I_\mrm{M}' \cup I_\mrm{R}',c)$ into $\mscr{I}'_s$.
We will not add $(\pi',I_\mrm{L} \cup I_\mrm{M} \cup I_\mrm{M}' \cup I_\mrm{R}',c)$ since it is dominated by the other two.

We then obtain $\mscr{I}_{s}$ by reducing $\mscr{I}'_{s}$.

\paragraph*{Forget Node:}
Suppose $s$ is a forget node. It has just one child $t$ such that $X_{t} = X_s \cup \{x\}$.
For each $(\pi,I,c)$ in $\mscr{I}_t$, we add the following triple $(\pi',I',c')$ to $\mscr{I}'_{s}$ where
\begin{itemize}
	\item $\pi'$ is the restriction of $\pi$ for $X_s$,
	\item $c' = c + |\{\, u \in X_s \mid \text{$ \{x,u\} \in E_{\le s}$ and $\prl{\pi}{u}{x}$} \,\}|$.
\end{itemize}
Let
\[
	I_\mrm{X} = \{\, (p_1,q_1,p_2,q_2) \in I \mid p_i <_{\pi_i} x \text{ and } x \le_{\pi_j} q_j \text{ with } \{i,j\} = \{1,2\}\,\}
\]
be the set of forbidden areas of $I$ intersecting with $x$. 
We note that the asymmetry between $p_i <_{\pi_i} x$ and $x \le_{\pi_j} q_j $ is due to the fact that the actual forbidden area starts after $p_1$ and $p_2$ and ends after $q_1$ and $q_2$, while the points of $x$ are exact.

The new forbidden area formed by $x$ and $I_\mrm{X}$ will be $\alpha = (y_1,z_1,y_2,z_2)$ where for $x_1 = \pred{\pi_1}(x)$ and $x_2 = \pred{\pi_2}(x)$,
\begin{align*}
	y_1 &= \min_{\pi_1} (\{x_1\} \cup \{\, p_1[x_1/x] \mid (p_1,q_1,p_2,q_2) \in I_\mrm{X} \,\})\,,
\\	z_1 &= \max_{\pi_1} (\{x_1\} \cup \{\, q_1[x_1/x] \mid (p_1,q_1,p_2,q_2) \in I_\mrm{X} \,\})\,,
\\	y_2 &= \min_{\pi_2} (\{x_2\} \cup \{\, p_2[x_2/x] \mid (p_1,q_1,p_2,q_2) \in I_\mrm{X} \,\})\,,
\\	z_2 &= \max_{\pi_2} (\{x_2\} \cup \{\, q_2[x_2/x] \mid (p_1,q_1,p_2,q_2) \in I_\mrm{X} \,\})\,.
\end{align*}
We then define $I'$ by 
\begin{align*}
	I' &= \{\, (p_1[x_1/x],q_1[x_1/x],p_2[x_2/x],q_2[x_2/x]) \mid (p_1,q_1,p_2,q_3) \in I-I_\mrm{X} \,\} \cup \{\, \alpha \,\} 
\end{align*}
This is illustrated in Figure~\ref{fig:perm_forget}.
\begin{figure}\centering
\newcommand{\forbiddenarea}[4]{%
	\fill[red!10] (#1,3) -- (#3,3) -- (#4,0) -- (#2,0);
	\draw[forgotten](#1, 3)--(#2,0);
	\draw[forgotten](#3, 3)--(#4,0);
	\draw[-](#1,3)--(#3,3);
	\draw[-](#2,0)--(#4,0);
}
	\begin{tikzpicture}[scale=0.5]\small
		\draw[->](0,0)--(10,0);
		\draw[->](0,3)--(10,3);
		\forbiddenarea{2.0}{0.5}{2.5}{1.5}
		\forbiddenarea{4.75}{2.5}{5.5}{3.5}
		\forbiddenarea{7}{5}{7.5}{7.5}
		\forbiddenarea{8.5}{8.5}{9}{9}
		\point{1}{3}
		\point{1}{0}
		\point{4}{3}
		\point{6}{3}
		\point{4}{0}
		\point{6}{0}
		\draw(4, 3.5) node{$m$};
		\draw(6, 3.5) node{$n$};
		\draw(1, -0.5) node{$p$};
		\draw(4, -0.5) node{$q$};
		\draw(1, 3.5) node{$x_1$};
		\draw(6, -0.5) node{$x_2$};
		\draw(3, 3.5) node[focused]{$x$};
		\draw(7, -0.5) node[focused]{$x$};
		\draw[focused](3, 3)--(7,0);
		\draw(4, 1.5) node{$I_\mrm{X}$};
		\draw(6.75, 1.5) node{$I_\mrm{X}$};
		\draw[->,ultra thick](10.5,1.5)--(11.5,1.5);
		\begin{scope}[shift={(12,0)}]
		\fill[red!10] (3,3) -- (7,3) -- (7,0);
		\fill[red!10] (3,0) -- (7,0) -- (5.5,3);
		\draw[->](0,0)--(10,0);
		\draw[->](0,3)--(10,3);
		\forbiddenarea{2.0}{0.5}{2.5}{1.5}
		\forbiddenarea{4.75}{2.5}{5.5}{3.5}
		\forbiddenarea{7}{5}{7.5}{7.5}
		\forbiddenarea{8.5}{8.5}{9}{9}
		\point{1}{3}
		\point{1}{0}
		\point{4}{3}
		\point{6}{3}
		\point{4}{0}
		\point{6}{0}
		\draw(4, 3.5) node{$m$};
		\draw(6, 3.5) node{$n$};
		\draw(1, -0.5) node{$p$};
		\draw(4, -0.5) node{$q$};
		\draw(1, 3.5) node{$x_1$};
		\draw(6, -0.5) node{$x_2$};
		\draw(3, 3.5) node[forgotten]{\scriptsize$(x)$};
		\draw(7, -0.5) node[forgotten]{\scriptsize$(x)$};
		\draw[forgotten](3, 3)--(7,0);
		\end{scope}
	\end{tikzpicture}
\caption{\label{fig:perm_forget} Forget Node. The forbidden areas in $I_\mrm{X}=\{(m,m,p,p),(n,n,q,x)\}$ will be merged into the single area $\alpha=(x_1,n,p,x_2)$ by forgetting $x$.}
\end{figure}
We obtain $\mscr{I}_{s}$ by reducing $\mscr{I}'_{s}$.

\paragraph*{Join Node:}
Suppose $s$ has two children $t$ and $t'$, where $X_{s} = X_{t} = X_{t'}$.
We say that a pair of $(\pi,I,c) \in \mscr{I}_{t}$ and $(\pi',I',c') \in \mscr{I}_{t'}$ is \emph{compatible} precisely in the case where
\begin{itemize}
\item $\pi = \pi'$, say $\pi=\pi'=(\pi_1,\pi_2)$, and
\item no members of $I$ and $I'$ intersect: that is,
there are no pairs of $(p_1,q_1,p_2,q_2) \in I$ and $(p_1',q_1',p_2',q_2') \in I'$ such that
$p_i <_{\pi_i} q_i'$ and $p_{j}' <_{\pi_j} q_j$ for some $i,j \in \{1,2\}$.
\end{itemize}
If they are compatible, we add the triple $((\pi_1,\pi_2),\,I \cup I',\,c + c')$ to $\mscr{I}'_{s}$.
We then obtain $\mscr{I}_{s}$ by reducing $\mscr{I}'_{s}$.

\begin{theorem}
The edge deletion problem for permutation graphs can be solved in $O(|V| (k!)^2 N^2 \mrm{poly}(k))$ time 
where $N=2^{7.34k}$ for the treewidth $k$ of $G$.
If $k$ is the pathwidth, it can be solved in $O(|V| (k!)^2 N \mrm{poly}(k))$ time.
\end{theorem}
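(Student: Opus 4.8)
The plan is to follow the template of the proof of Theorem~\ref{thm:interval}, replacing the one-dimensional counting of forbidden intervals by a two-dimensional counting of forbidden areas. Correctness is not the issue here: the per-node constructions for the permutation case preserve the invariant of Condition~\ref{cond:invariant_p}, so the number $c$ stored in the unique element of $\mscr{I}_s$ at the root is the optimum. What remains is to bound $|\mscr{I}_s|$ and the cost of each node operation, and then multiply by the number of nodes.

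First I would bound the size of a reduced set $\mscr{I}_s$. Writing $k$ for the maximum bag size, the restriction $\pi=(\pi_1,\pi_2)$ is a pair of linear orders on $X_s$, so there are at most $(k!)^2$ choices. Because $\mscr{I}_s$ is reduced, two elements sharing the same $\pi$ and the same $I$ must also share the same $c$; hence it suffices to bound, for a fixed $\pi$, the number $N$ of admissible forbidden-area sets $I$. Recall from the definition that the quadruples in $I$ are pairwise non-crossing with respect to both $\pi_1$ and $\pi_2$, and, analogously to the interval case, only boundedly many areas are incident to any single anchor point on a given line. I would therefore build $I$ incrementally and set up a system of recurrences whose states record, for each of the two lines, whether the currently rightmost area is already right-bounded or still right-unbounded; the two-dimensional geometry couples the two orders and forces a constant but larger collection of states than the two states $A_{\mrm{b}},A_{\mrm{u}}$ used for intervals. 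Solving this linear recurrence (equivalently, bounding the spectral radius of its transition matrix) yields $N = O(2^{7.34k})$, and thus $|\mscr{I}_s| \le (k!)^2 N$.

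Next I would bound the per-node work exactly as in Theorem~\ref{thm:interval}. A leaf costs $O(1)$. At an introduce node each output triple is obtained from a unique input triple of $\mscr{I}_t$ together with a choice of extension $\pi'$ and the partition $I_{\mrm{L}},I_{\mrm{M}},I_{\mrm{R}}$, so the work is $O((k!)^2 N\,\mrm{poly}(k))$. A forget node produces exactly one output per input, again $O((k!)^2 N\,\mrm{poly}(k))$. A join node must test every pair $((\pi,I,c),(\pi',I',c'))$ with $\pi=\pi'$ for compatibility and, when compatible, form $(\pi,\,I\cup I',\,c+c')$ in $\mrm{poly}(k)$ time; since there are at most $N^2$ such pairs per $\pi$, this costs $O((k!)^2 N^2\,\mrm{poly}(k))$. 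As the nice tree-decomposition has $O(|V|)$ nodes and the join nodes dominate, the total running time is $O(|V|(k!)^2 N^2\,\mrm{poly}(k))$. When $k$ is the pathwidth the decomposition has no join nodes, so the dominating factor becomes the introduce/forget cost and the bound improves to $O(|V|(k!)^2 N\,\mrm{poly}(k))$.

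The main obstacle is the $N=O(2^{7.34k})$ estimate. Unlike the interval case, where forbidden intervals lie on a single line and a two-state recurrence suffices, a forbidden area here is pinned by four anchors and the non-crossing condition couples the two orders $\pi_1$ and $\pi_2$. The delicate work lies in choosing a finite state set that faithfully captures how many and which areas remain ``open'' on each line after processing a prefix of the anchor positions, in writing down the correct transition counts between these states, and in verifying that the growth rate is $2^{3.67}$ per point, hence $2^{7.34k}$ over the $2k$ anchor positions; obtaining this sharp constant, rather than a loose over-count, is the part that requires care.
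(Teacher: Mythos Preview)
Your overall architecture is correct and matches the paper: bound the number of permutation pairs $\pi=(\pi_1,\pi_2)$ by $(k!)^2$, bound the number $N$ of admissible forbidden-area sets $I$ for fixed $\pi$, observe that reducedness makes $|\mscr{I}_s|\le (k!)^2 N$, and then account for introduce/forget/join nodes exactly as in Theorem~\ref{thm:interval}, with join nodes contributing the $N^2$ factor that disappears for pathwidth. That part is fine.

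The gap is in your treatment of $N$. You propose a single joint recurrence that scans ``the $2k$ anchor positions'' while tracking, per line, whether the rightmost area is bounded or unbounded. Two difficulties make this sketch unworkable as stated. First, there is no canonical linear order on all $2k$ anchors: the orders $\pi_1$ and $\pi_2$ are independent, and a forbidden area's anchors on the two lines need not align, so ``processing a prefix of the anchor positions'' is undefined without further work. Second, a single bounded/unbounded bit per line is not enough state: several distinct areas in $I$ may share a common projection point on one line (e.g.\ $(p,p,p_2,q_2),(p,p,p_2',q_2')\in I$), so the two projections alone do not determine $I$, and your state space would have to record how the two lines' intervals are matched into areas. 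You assert the resulting spectral radius is $2^{3.67}$ per point, but nothing in your setup forces that constant.

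The paper obtains $N=O(2^{7.34k})$ by a different, decoupling idea. It projects $I$ to interval sets $I_1,I_2$ on the two lines and then \emph{enriches} each projected interval with a three-valued label (``hub'', ``terminal'', or neither) recording whether several areas of $I$ collapse onto it and where such a run ends. An explicit reconstruction procedure shows that $I$ is determined by the enriched pair $(I_1',I_2')$, so $N$ is at most the product of the counts of $I_1'$ and $I_2'$. Each $I_i'$ is then counted by a one-line recurrence of the same shape as in Theorem~\ref{thm:interval} but with larger coefficients (to account for the three labels), namely $A_{\mrm b}(n{+}1)=4A_{\mrm b}(n)+4A_{\mrm u}(n)$ and $A_{\mrm u}(n{+}1)=8A_{\mrm b}(n)+9A_{\mrm u}(n)$, giving $A_{\mrm b}(n)\in O(2^{3.67n})$; with $k$ points on each line this yields $N\in O(2^{7.34k})$. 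The hub/terminal encoding and its reconstructibility are the missing idea in your proposal.
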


\begin{proof}
Let $k$ be the maximum size of the assigned set $X_s$ to a node of the tree-decomposition.
There are at most $(k!)^2$ variants of $\pi$ of abstractions $(\pi,I,c)$.

In order to count the number $N$ of possible forbidden interval sets $I$ for a fixed $\pi$ in $(\pi,I,c) \in \mscr{I}_s$, we give a transformation of $I$ below.
Let $I_i$ for $i \in \{1,2\}$ be the interval projections of $I$:
\[
	I_i = \{\, (p_i,q_i) \mid (p_1,q_1,p_2,q_2) \in I\,\}
\,.\]
The proof of Theorem~\ref{thm:interval} has shown that there can be at most $2^{2.19 k}$ possibilities for each $I_i$.
However, since there is no one-one corresponding between $I_1$ and $I_2$, the pair $(I_1,I_2)$ is not informative enough to recover $I$.
For example, there can be distinct $(p,q)$ and $(p',q')$ in $I_1$ such that $(p,q,n,n),(p',q',n,n) \in I$.
We call an interval $(p,p) \in I_1$ a \emph{hub} if there are two (or more) distinct intervals $(p_2,q_2),(p_2',q_2') \in I_2$ such that $(p,p,p_2,q_2),(p,p,p_2',q_2') \in I$.
We call an interval $(p,q) \in I_1$ a \emph{terminal} if it is not a hub and there is no $n$ such that $(p,q,n,n),(\suc{\pi_1}(p,q),n,n) \in I$ where $\suc{\pi_1}(p,q) \in I_1$ is the successor of $(p,q)$ in $I_1$.
We use the same names for intervals in $I_2$.
Let $I_1'$ enrich $I_1$ so that each interval $(p,q) \in I_1$ has a 3-valued variable that indicate whether it is a hub, a terminal, or neither.
Figure~\ref{fig:perm_complexity} illustrates how $I$ will be transformed into $I_1'$ and $I_2'$.

We will show that $N$ is at most the number of possible enriched sets $I_1'$ and $I_2'$. 
Algorithm~\ref{alg:reconstruction} recovers elements of $I$ from left to right using $I_1'$ and $I_2'$, where elements of $I_1'$ and $I_2'$ are sorted and stored in stacks.
From the first intervals $(p_i,q_i) = \min_{\pi} I_i$ of the stacks, we obtain the first forbidden area $(p_1,q_1,p_2,q_2) \in I$.
If $(p_i,q_i)$ is a hub, we keep it as the \emph{active hub}.
Note that it is impossible that both $(p_1,q_1)$ and $(p_2,q_2)$ are hubs simultaneously.
Suppose we have reconstructed a forbidden area $(p_1,q_1,p_2,q_2) \in I$ with no active hub.
In this case, the next forbidden area will be $(\suc{\pi_1}(p_1,q_1),\suc{\pi_2}(p_2,q_2))$.
Suppose $(p_1,q_1)$ is the active hub.
If $(p_2,q_2)$ is neither a hub nor a terminal, then the next forbidden area will be $(p_1,q_1,\suc{\pi_2}(p_2,q_2))$ and we keep the active hub $(p_1,q_1)$.
If $(p_2,q_2)$ is a terminal, then the next forbidden area will be $(\suc{\pi_1}(p_1,q_1),\suc{\pi_2}(p_2,q_2))$ and the active hub will be null.
If $(p_2,q_2)$ is a hub, then the next forbidden area will be $(\suc{\pi_1}(p_1,q_1),p_2,q_2)$ and the new active hub is $(p_2,q_2) \in I_2$.
\begin{figure}\centering
\newcommand{\forbiddenarea}[4]{%
	\fill[red!10] (#1,3) -- (#3,3) -- (#4,0) -- (#2,0);
	\draw[forgotten](#1, 3)--(#2,0);
	\draw[forgotten](#3, 3)--(#4,0);
	\draw[-](#1,3)--(#3,3);
	\draw[-](#2,0)--(#4,0);
}
	\begin{tikzpicture}[scale=0.5]\small
		\forbiddenarea{3.5}{2}{4}{3}
		\forbiddenarea{4.5}{3.5}{5}{5.5}
		\forbiddenarea{5.5}{8.5}{6.25}{9.5}
		\forbiddenarea{9}{10}{10.5}{10.5}
		\forbiddenarea{12.5}{11}{13.5}{11.5}
		\forbiddenarea{14.5}{14}{15}{15}
		\draw[->](1,0)--(16.0,0);
		\draw[->](1,3)--(16.0,3);
		\point{2}{3}
		\point{10}{3}
		\point{11.5}{3}
		\point{14}{3}
		\point{15.5}{3}
		\draw(2, 3.5) node{$p_1$};
		\draw(10, 3.5) node{$p_2$};
		\draw(11.5, 3.5) node{$p_3$};
		\draw(14, 3.5) node{$p_4$};
		\draw(15.5, 3.5) node{$p_5$};
		\point{1.5}{0}
		\point{4.5}{0}
		\point{6}{0}
		\point{7.5}{0}
		\point{13}{0}
		\point{15.5}{0}
		\draw(1.5, -0.5) node{$q_1$};
		\draw(4.5, -0.5) node{$q_2$};
		\draw(6, -0.5) node{$q_3$};
		\draw(7.5, -0.5) node{$q_4$};
		\draw(13, -0.5) node{$q_5$};
		\draw(15.5, -0.5) node{$q_6$};
	\end{tikzpicture}
	\caption{\label{fig:perm_complexity} The original forbidden areas of $I$ can be recovered from $I_1'=\{(p_1,p_1,\mtt{h}),(p_1,p_2,\mtt{-}),(p_3,p_3,\mtt{t}),(p_4,p_4,\mtt{t})\}$ and $I_2'=\{(q_1,q_1,\mtt{-}),(q_1,q_2,\mtt{-}),(q_4,q_4,\mtt{h}),(q_5,q_5,\mtt{t})\}$, where $\mtt{h}$ and $\mtt{t}$ represent a hub and a terminal, respectively.}
\end{figure}
\begin{algorithm2e}[t!]
	\caption{Reconstruction of $I$\label{alg:reconstruction}}
	\SetVlineSkip{0.5mm}
	$I \leftarrow \emptyset$ and $\msf{ActiveHub} \leftarrow \mtt{Null}$\;
	\Repeat{the stacks $I_1'$ and $I_2'$ are empty}{%
		\If{$\msf{ActiveHub} = \mtt{Null}$}{%
			$(p_1,q_1,r_1) \leftarrow I_1'.\mtt{pop}$ and $(p_2,q_2,r_2) \leftarrow I_2'.\mtt{pop}$\;
			$I \leftarrow I \cup \{(p_1,q_1,p_2,q_2)\}$\;
			\lIf{$r_1 = \mtt{h}$}{$\msf{ActiveHub} \leftarrow (1,p_1,q_1)$}
			\lElseIf{$r_2 = \mtt{h}$}{$\msf{ActiveHub} \leftarrow (2,p_2,q_2)$}
		}\ElseIf{$\msf{ActiveHub} = (1,p_1,q_1)$ for some $p_1,q_1$}{
			$(p_2,q_2,r_2) \leftarrow I_2'.\mtt{pop}$\;
			$I \leftarrow I \cup \{(p_1,q_1,p_2,q_2)\}$\;
			\lIf{$r_2 = \mtt{h}$}{$\msf{ActiveHub} \leftarrow (2,p_2,q_2)$}
			\lElseIf{$r_2 = \mtt{t}$}{$\msf{ActiveHub} \leftarrow \mtt{Null}$}
		}\ElseIf{$\msf{ActiveHub} = (2,p_2,q_2)$ for some $p_2,q_2$}{
			$(p_1,q_1,r_1) \leftarrow I_1'.\mtt{pop}$\;
			$I \leftarrow I \cup \{(p_1,q_1,p_2,q_2)\}$\;
			\lIf{$r_1 = \mtt{h}$}{$\msf{ActiveHub} \leftarrow (1,p_1,q_1)$}
			\lElseIf{$r_1 = \mtt{t}$}{$\msf{ActiveHub} \leftarrow \mtt{Null}$}
	}	}
\end{algorithm2e}

A point $p$ may appear in $I_i$ in the following ways:
\begin{enumerate}
	\item ending a forbidden interval started earlier: $(q,p) \in I_i$ for some $q <_{\pi_i} p$,
	\item starting and ending a minimal forbidden interval: $(p,p) \in I_i$, which may be a hub or a terminal,
	\item starting a new forbidden interval: $(p,q) \in I_i$ for some $q >_{\pi_i} p$, which may be a terminal but not a hub.
\end{enumerate}
We do not care whether or not it is a terminal in the first case, leaving counting different possibilities to the interval starting point $q$.
Solving the recurrence equations
\begin{align*}
	A_{\mrm{b}}(n+1) &= 4 A_\mrm{b}(n) + 4 A_\mrm{u}(n)\,,	&	A_{\mrm{b}}(0)=4\,, 
\\	A_{\mrm{u}}(n+1) &= 8 A_\mrm{b}(n) + 9 A_\mrm{u}(n)\,, &	A_{\mrm{u}}(0)=3\,,
\end{align*}
we obtain $A_{\mrm{b}}(n) \in O(2^{3.67 n})$.
All in all, we have at most $N \in O(2^{7.34k})$ possibilities for $I$.

Arguments similar to the proof of Theorem~\ref{thm:interval} derive the theorem.
\end{proof}


\section{Other Classes Related to Interval Graphs}
The algorithm presented in the previous section can be applied to the $\mcal{C}$-\textsc{Edge-Deletion} for some subclasses and a superclass of interval graphs with a slight modifications.

\subsection{Proper interval graphs}
An interval representation $\pi$ is said to be \emph{proper} if there are no $u,v \in V$ such that $l_u <_\pi l_v <_\pi r_v <_\pi r_u$.
An interval graph is \emph{proper} if it admits a proper interval representation.
In accordance with the definition of the graph class, we simply require $\pi'$ in \textit{Introduce Node} of the algorithm in Section~\ref{sec:interval} to be a proper interval representation.
\begin{corollary}
The edge deletion problem for proper interval graphs can be solved in $O(|V| N^2 \mrm{poly}(k))$ time 
where $N=2^{3.91k}\frac{(2k)!}{(k+1)!}$ for the treewidth $k$ of $G$.
If $k$ is the pathwidth, it can be solved in $O(|V| N \mrm{poly}(k))$ time.
\end{corollary}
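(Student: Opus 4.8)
The plan is to reuse the entire dynamic-programming framework of Theorem~\ref{thm:interval}: only the \emph{Introduce Node} step changes, where we additionally require the extension $\pi'$ to be a proper interval representation, so that $\mscr{I}_s$ ranges over abstractions of \emph{proper} interval subgraphs of $(X_{\le s},E_{\le s})$. First I would check that Condition~\ref{cond:invariant_i} is preserved. The \emph{Forget} and \emph{Join} steps need no modification: restricting a proper interval representation to a subset of vertices stays proper, so forgetting a vertex (which only reclassifies it, leaving the underlying order untouched) keeps properness; and for \emph{Join}, compatibility means no forbidden interval of $I_1$ intersects one of $I_2$, hence, since non-intersecting intervals are disjoint and never nested, the two families of forgotten intervals can be placed in disjoint regions, which rules out any proper containment in the combined representation. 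For \emph{Introduce}, requiring $\pi'$ to be proper on the active vertices together with ``$\pi'$ respects $E$ and $I$'' already yields global properness: the new vertex $x$ is non-adjacent to every forgotten $w$ (they never share a bag, so $\{x,w\}\notin E$), so respecting $I$ forces $(l_x,r_x)$ to be disjoint from every forgotten interval, and disjointness precludes containment; containment among active vertices is excluded by properness of $\pi'$, and containment among forgotten vertices was excluded at earlier steps. Conversely no proper representation is lost, since every restriction of a proper representation is proper. Hence the modified algorithm is correct.

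The remaining work is the complexity bound, which rests on two counts. First I would count the linear orders $\pi$ over $\mathit{LR}_{X_s}$ arising from a proper representation of $k=|X_s|$ active vertices. The defining property of a proper interval order is that the order of the left endpoints coincides with the order of the right endpoints. Thus such a $\pi$ is specified by (i) a permutation of the $k$ vertices fixing this common order ($k!$ choices), and (ii) an interleaving of the left-endpoint sequence with the right-endpoint sequence in which each $l_u$ precedes $r_u$ and both sequences follow the common order. Condition (ii) is exactly the ballot condition that every prefix contains at least as many left endpoints as right endpoints, counted by the Catalan number $C_k=\frac{1}{k+1}\binom{2k}{k}$. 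Hence the number of proper orders is
\begin{equation*}
	k!\cdot C_k = k!\cdot\frac{1}{k+1}\binom{2k}{k} = \frac{(2k)!}{(k+1)!}\,,
\end{equation*}
which replaces the factor $(2k)!/2^k$ of Theorem~\ref{thm:interval}.

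Second, for a fixed proper order $\pi$ I would bound the number of admissible forbidden-interval sets $I$ by a recurrence in the spirit of the proof of Theorem~\ref{thm:interval}, scanning the $2k$ points from left to right and tracking whether the last forbidden interval is right-bounded or right-unbounded. The difference from the general interval case is that a forgotten interval is itself proper, so it may not be properly contained in an active interval; concretely, a minimal forbidden interval may not sit in a gap that is spanned by some active interval. Encoding this restriction prunes some transitions of the recurrence of Theorem~\ref{thm:interval}, lowering its growth rate and yielding at most $2^{3.91k}$ forbidden-interval sets per order. Multiplying the two counts gives $|\mscr{I}_s|\le N = 2^{3.91k}\frac{(2k)!}{(k+1)!}$. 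Finally I would transfer the per-node time analysis of Theorem~\ref{thm:interval} essentially verbatim: introduce and forget nodes derive one entry per source entry, while a join node inspects at most $N^2$ pairs of entries sharing a common order, so the dominant cost is $O(N^2\mrm{poly}(k))$ per node and $O(|V|N^2\mrm{poly}(k))$ overall; a nice path-decomposition has no join nodes, so the bound improves to $O(|V|N\mrm{poly}(k))$.

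The main obstacle I expect is the second count: pinning down exactly which forbidden-interval configurations remain realizable once every forgotten interval must also be proper, and then setting up and solving the corresponding modified recurrence to certify the constant $3.91$. By contrast, the ordering count and the correctness argument are comparatively routine adaptations of the interval case.
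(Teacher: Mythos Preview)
Your plan matches the paper's approach exactly: restrict the Introduce step to proper $\pi'$, count proper orders as $k!\,C_k=(2k)!/(k+1)!$, and tighten the forbidden-interval recurrence. Your correctness discussion is in fact more thorough than what the paper gives (the paper simply asserts the modification and moves to the count).

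The piece you flag as the main obstacle is resolved in the paper by a single observation that is simpler than your ``spanned gap'' formulation. The gap immediately after a left endpoint $l_u$ is \emph{always} spanned by the active interval of $u$ itself, since $l_u <_\pi \suc{\pi}(l_u) \le_\pi r_u$; hence properness forces $(l_u,l_u)\notin I$ for every active $u$. The paper does not attempt to track which right-endpoint gaps happen to be spanned. It simply keeps the crude bound $A(n+1)\le 5A(n)$ from the proof of Theorem~\ref{thm:interval} at each of the $k$ right-endpoint positions, and at each of the $k$ left-endpoint positions drops the $(p,p)$ option, which lowers the multiplier to $3$. Multiplying over all $2k$ points gives at most $3^{k}\cdot 5^{k}=15^{k}$ forbidden-interval sets per order, and $\log_2 15\approx 3.907$, which is the claimed $2^{3.91k}$. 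So no delicate recurrence needs to be solved; you only have to separate the two kinds of endpoints and multiply.
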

\begin{proof}
Let $k$ be the maximum size of the assigned set $X_s$ to a node of a nice tree-decomposition.
The number of possible proper interval representations over $k$ vertices is $k! C_k$ for the $k$-th Catalan number $C_k=\frac{(2k)!}{(k+1)!k!}$.
Let $A(n)$ be the numbers of possible ways of drawing forbidden intervals with $n$ points in addition to $\bot$ where the last intervals may be right-unbounded.
We observed in the proof of Theorem~\ref{thm:interval} (Figure~\ref{fig:interval_complexity}) that $A(n+1) \le 5A(n)$.
However, in the case of proper interval representations, we never have $(l_u,l_u) \in I$.
If the last $(n+1)$st point is $l_u$, we have at most three times more possible forbidden interval sets than the number of possible forbidden interval sets over $n$ points.
Therefore, the number of possible abstractions in each $\mscr{I}_s$ is at most
$O(\frac{(2k)!}{(k+1)!} \cdot 3^{k}\cdot 5^k) \subseteq O(2^{3.91k}\frac{(2k)!}{(k+1)!} )$.
\end{proof}

\subsection{Trivially perfect graphs}
An interval representation $\pi$ is said to be \emph{nested} if there are no $u,v \in V$, such that $l_u <_\pi l_v <_\pi r_u <_\pi r_v$.
A \emph{trivially perfect graph} (a.k.a.\ \emph{nested interval graph}) is an interval graph that admits a nested interval representation.
The algorithm presented in Section~\ref{sec:interval} can easily be modified so that it solves the edge deletion problem for trivially perfect graphs.
In accordance with the definition of the graph class, we simply require $\pi'$ in \textit{Introduce Node} of the algorithm in Section~\ref{sec:interval} to be a nested interval representation.
\begin{corollary}
The edge deletion problem for trivially perfect graphs can be solved in $O(|V| N^2 \mrm{poly}(k))$ time 
where $N=2^{2.4k}\frac{(2k)!}{k!}$ for the treewidth $k$ of $G$.
If $k$ is the pathwidth, it can be solved in $O(|V| N \mrm{poly}(k))$ time.
\end{corollary}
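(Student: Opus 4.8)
The plan is to keep the entire apparatus of Section~\ref{sec:interval}---the abstractions $(\pi,I,c)$, the domination relation $\sqsubseteq_\pi$, reduction, and all four node-update rules---and to change only \textit{Introduce Node}, where I would discard every extension $\pi'$ of $\pi$ that is not a nested interval representation, as prescribed before the statement. I would first dispatch correctness by reusing the arguments behind Condition~\ref{cond:invariant_i} and checking that the single extra pruning is both sound and complete. Soundness: the nesting condition $l_u <_\pi l_v <_\pi r_u <_\pi r_v$ concerns only the pair $\{u,v\}$; if two intervals cross then they intersect, so $\{u,v\}\in\mcal{E}_\rho\subseteq E$, hence $u$ and $v$ occur in a common bag and are simultaneously active at the node introducing the second of them, where the check already rules the crossing out. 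Completeness: since the restriction of a nested representation to the active set is again nested, every nested representation of a subgraph of $(X_{\le s},E_{\le s})$ still has its abstraction dominated by some element of $\mscr{I}_s$, so the invariant survives.

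For the complexity I would follow the proof of Theorem~\ref{thm:interval} line by line. Writing $N$ for the maximum number of abstractions stored in a bag, forget and introduce nodes touch each stored abstraction a constant number of times while a join node examines at most $N^2$ candidate pairs; over the $O(|V|)$ nodes this gives $O(|V|\,N^2\,\mrm{poly}(k))$ time, and $O(|V|\,N\,\mrm{poly}(k))$ when $T$ is a path and no join occurs. Everything then reduces to bounding $N$ by $2^{2.4k}\frac{(2k)!}{k!}$, which I would split into the number of admissible orders $\pi$ times the number of forbidden-interval sets $I$ compatible with a fixed $\pi$.

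The order count is the easy half: a nested representation restricted to the $k$ active vertices is a laminar order, and laminar orders of $k$ labelled intervals number $k!\,C_k=\frac{(2k)!}{(k+1)!}\le\frac{(2k)!}{k!}$ for the Catalan number $C_k$. For the forbidden-interval half I would re-run the point-by-point recurrence of Figure~\ref{fig:interval_complexity} with the transitions that nesting forbids pruned away. The decisive new constraint is that each forgotten closure must be laminar---not merely non-crossing with its siblings---with \emph{every} active interval, so an anchor pair $(p,q)$ is admissible only when the span from just after $p$ to just after $q$ crosses no active interval, i.e.\ $p$ and $q$ sit at compatible nesting depths of the active forest. Re-deriving the two-state recurrence ($A_\mrm{b}$, $A_\mrm{u}$) under these pruned transitions should give $A_\mrm{b}(n)\in O(2^{1.2n})$ over $n$ points, hence $2^{2.4k}$ over the $2k$ endpoints, and multiplying the two halves yields $N\le 2^{2.4k}\frac{(2k)!}{k!}$.

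The hard part will be exactly this forbidden-interval recurrence: unlike the proper case of the previous corollary, where the only excluded pattern was the minimal interval $(l_u,l_u)$, here the admissibility of an anchor $(p,q)$ is a genuinely two-ended, depth-sensitive condition coupling it to the surrounding active nesting. I would therefore need to choose the states of the recurrence carefully (tracking enough of the current nesting depth), verify the base cases, and confirm that the pruned multipliers collapse to the constant $2.4$ rather than a larger value; getting this accounting right, while ensuring the modified \textit{Introduce Node} still generates all and only the nesting-consistent abstractions, is where I expect the real work to lie.
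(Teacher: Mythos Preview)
Your correctness discussion is fine and matches the paper. The divergence is entirely in the complexity analysis, and there the proposal has a genuine gap.

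You plan to factor $N$ as (number of nested orders $\pi$) $\times$ (maximum number of admissible $I$ for a fixed $\pi$), and to bound the second factor by re-running the two-state left-to-right recurrence $(A_\mrm{b},A_\mrm{u})$ of Theorem~\ref{thm:interval} with some transitions pruned. But the nesting constraint is not local in the way non-crossing was: whether a forbidden-interval anchor $(p,q)$ is admissible depends on the full nesting depth profile of $\pi$ between $p$ and $q$, not merely on whether the previous forbidden interval is open or closed. A finite-state scan with a bounded number of states cannot encode that depth, which may reach $k$; you yourself note that the states would have to ``track enough of the current nesting depth,'' but once the state space grows with $k$ you no longer have a linear recurrence with a fixed dominant root, and there is no reason to expect the exponent $1.2$ per point to fall out. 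The sentence ``should give $A_\mrm{b}(n)\in O(2^{1.2n})$'' is pure hope, not an argument.

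The paper avoids this altogether by \emph{not} separating $\pi$ from $I$. It counts pairs $(\pi,I)$ jointly (modulo vertex relabelling), writing $D_n$ for their number on $n$ active vertices, and exploits the laminar structure through a recursive decomposition rather than a scan: a pair is called \emph{splittable} if the $2n$ endpoints partition into a left block and a right block with no active interval and no non-minimal forbidden anchor straddling the cut. Non-splittable pairs are counted directly (either the outermost forbidden interval spans everything, contributing a Catalan number $C_n$, or the outermost object is an active interval, contributing $4D_{n-1}$), and splittable pairs are counted by summing over the first cut position, giving
\[
D_n = C_n + 4D_{n-1} + \sum_{i=1}^{n-1} 2D'_iD_{n-i}\,.
\]
An induction on $n$ then shows $D_n \le 2^{2.4n}C_n$, from which $n!\,D_n \le 2^{2.4n}\frac{(2n)!}{(n+1)!}$ and the stated bound on $N$ follow. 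This ``split at the top level of the laminar forest'' recursion is the standard device for counting nested structures, and it is what your left-to-right scan is missing.
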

\begin{proof}
In this case, forbidden intervals and active intervals are nested.
Let $D_n$ be the number of possible pairs $(\pi,I)$ in abstractions with $n$ active vertices modulo renaming of vertices.
Then, the maximum cardinality of $\mscr{I}_s$ is at most $n! \cdot D_n$.
Hereafter we assume that neither $(\bot,\bot)$ nor $(\pred{\pi}(\top),\pred{\pi}(\top))$ appears in $I$.
Let us say that $(\pi,I)$ is \emph{splittable} if $\mathit{LR}_{X_s} - \{\bot,\top\}$ can be partitioned into two non-empty subsets $Y_1$ and $Y_2$ so that
\begin{itemize}
\item $\max_{\pi} Y_1 < \min_{\pi} Y_2$,
\item for any $u \in X_s$, $l_u$ and $r_u$ belong to the same component $Y_1$ or $Y_2$,
\item if $(p,q) \in I$ and $p \neq q$, $\suc{\pi}(p)$ and $q$ belong to the same component $Y_1$ or $Y_2$.
\end{itemize}
Otherwise, $(\pi,I)$ is \emph{non-splittable}.
Let $D'_n$ and $D''_n$ count the numbers of non-splittable and splittable pairs $(\pi,I)$, respectively, where $n=|X_s|$.
We have $D_0=D_0'=1$,  $D_0''=D_1''=0$, $D_1=D_1' = 3$, and for $n \ge 2$,
\begin{align}
	D_n &= D'_n + D''_n\,,
\\ \label{eq:nonsplit}
	D'_n &= C_n + 4D_{n-1}\,,
\\ \label{eq:split}
	D''_n &= \sum_{i=1}^{n-1} 2 D'_i D_{n-i}\,.
\end{align}
The first term $C_n$ of (\ref{eq:nonsplit}) is the $n$th Catalan number, which counts the possible permutations $\pi$ when $(\bot,\pred{\pi}(\top)) \in I$, i.e., when $I = \{ (\bot,\pred{\pi}(\top)) \}$.
The second term $4D_{n-1}$ counts cases where $(\bot,\pred{\pi}(\top)) \notin I$ and $(\suc{\pi}(\bot),\pred{\pi}(\top))=(l_u,r_u)$ for some vertex $u$.
The coefficient $4$ counts the number of subsets of $\{(l_u,l_u),(\pred{\pi}(r_u),\pred{\pi}(r_u))\}$.
Note that when $n=1$, $l_u = \pred{\pi}(r_u)$ and thus we have $D_1' = C_1 + 2 D_0$.
Each term $2 D'_i D_{n-i}$ of (\ref{eq:split}) counts the number of pairs where the first splitting point is at the $i$th active vertex interval.
In other words, $2i$ is the minimum cardinality of the first splitting component $Y_1$. 
The coefficient $2$ counts the cases where $(r_v,r_v)$ presents and absents in $I$ where $r_v = \max_{\pi} Y_1$.

We show $D_n \le 2^{\alpha n}C_n $ for $\alpha=2.4$ by induction on $n$.
One can confirm it is true for $n \le 22$ by calculation.
For $n \ge 23$,
\begin{align*}
	D_n &= C_n + 4D_{n-1} + 2\sum_{i=1}^{n-1} (C_i + 4D_{i-1}) D_{n-i} - 4D_{0} D_{n-1}
\\	&\le   C_n +  2 \sum_{i=1}^{n-1} (C_i + 2^{\alpha(i-1)+2}C_{i-1}) 2^{\alpha (n-i)}C_{n-i}
\\	& =  C_n + 2 \sum_{i=1}^{n-1} 2^{\alpha(n-i)} C_i C_{n-i} + 2^{\alpha(n-1)+3} \sum_{i=1}^{n-1} C_{i-1} C_{n-i}
\\	& =  C_n  + \sum_{i=1}^{n-1} (2^{\alpha(n-i)}+2^{\alpha i}) C_i C_{n-i} + 2^{\alpha(n-1)+3}(C_0C_{n-1} + C_{n-1})
\\	& \le  C_n 
	+ (2^{\alpha(n-1)}+2^{\alpha}) C_n
	+ 2^{\alpha(n-1)+4} C_{n-1} 
\\	& = (1+2^{\alpha(n-1)} + 2^{\alpha} + {\textstyle \frac{n+1}{4n-2}}2^{\alpha(n-1)+4})C_{n}\,.
\end{align*}
Noting that $n \ge 23$ and $\alpha = 2.4$, we obtain
\begin{align*}
	D_n &\le 2^{2.4 n} C_n\,.
\end{align*}
Hence, the number of possible abstractions is at most $(k+1)! D_{k+1} \le (k+1)! \cdot 2^{2.4 (k+1)} \frac{(2(k+1))!}{(k+2)!(k+1)!} \in O(2^{2.4 k} \frac{(2k)!}{k!})$.
\end{proof}


\subsection{Circular-arc graphs}
Circular-arc graphs are a generalization of interval graphs which have an arc model, which can be seen as a ``circular'' interval representation.
For a linear order $\pi$ over a set $S$ and four elements $p_1,q_1,p_2,q_2 \in S$, we write $\its{\pi}{(p_1,q_1)}{(p_2,q_2)}$ if either
\begin{itemize}
	\item $p_1 <_\pi m <_\pi q_1$ for some $m \in \{p_2,q_2\}$, or
	\item $q_1 <_\pi p_1 <_\pi m$ or $m <_\pi q_1 <_\pi p_1$ for some $m \in \{p_2,q_2\}$.
\end{itemize}
Otherwise, we write $\prl{\pi}{(p_1,q_1)}{(p_2,q_2)}$.
A \emph{circular-arc graph} is a graph $G_\pi = (V,E)$ such that
\[
	E = \{\, \{u,v\} \subseteq V \mid \its{\pi}{(l_u,r_u)}{(l_v,r_v)} \,\}
\]
for some linear order $\pi$ over $\mathit{L}_V \cup \mathit{R}_V$. Note that this set contains neither $\top$ nor $\bot$.
The algorithm presented in Section~\ref{sec:interval} can easily be modified so that it solves the edge deletion problem for circular-arc graphs by replacing the definitions of $\prl{\pi}{}{}$ and $\its{\pi}{}{}$ as above,
and defining $\suc{\pi}( \max_\pi X)=\min_\pi X$ and $\pred{\pi}(\min_\pi X )=\max_\pi X$.
Since we allow $r_u <_\pi l_u$, the number of admissible arc model is bigger than that of (ordinary) interval representations.
This affects the computational complexity.
\begin{corollary}
The edge deletion problem for circular-arc graphs can be solved in $O(|V| N^2 \mrm{poly}(k))$ time 
where $N=(2k)! \cdot 2^{4.38 k}$ for the treewidth $k$ of $G$.
If $k$ is the pathwidth, it can be solved in $O(|V| N \mrm{poly}(k))$ time.
\end{corollary}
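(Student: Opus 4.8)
The plan is to reuse the interval-graph algorithm of Section~\ref{sec:interval} essentially verbatim, substituting only the redefined predicates $\its{\pi}{\cdot}{\cdot}$ and $\prl{\pi}{\cdot}{\cdot}$ together with the wrap-around conventions $\suc{\pi}(\max_\pi X)=\min_\pi X$ and $\pred{\pi}(\min_\pi X)=\max_\pi X$. First I would verify that Condition~\ref{cond:invariant_i} is preserved by the Introduce, Forget, and Join rules under these new definitions. Each node operation only ever appeals to the successor or predecessor of an anchor point and to whether two arcs intersect; since the intersection-closure construction still guarantees that distinct forbidden arcs cannot cross, the case analyses for the three node types carry over with only notational changes, and the correctness argument of Theorem~\ref{thm:interval} applies unchanged. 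This establishes that the modified algorithm computes a reduced set $\mscr{I}_s$ satisfying the invariant, whose root value is the optimum.

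It then remains to bound the number $N$ of abstractions stored per bag, since the running time again has the shape $O(|V|\cdot N^2\,\mrm{poly}(k))$, and $O(|V|\cdot N\,\mrm{poly}(k))$ in the pathwidth case where no join nodes occur. I would factor $N$ as the number of admissible circular orders $\pi$ times the number of forbidden-arc sets $I$ for a fixed $\pi$. The first factor grows from $(2k)!/2^k$ in the interval setting to $(2k)!$: because we now permit $r_u <_\pi l_u$, both relative orders of the two endpoints of each vertex are allowed, so the division by $2^k$ disappears. This single change is what distinguishes the circular-arc bound from the interval bound.

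For the second factor I would argue that cutting the circle at a fixed gap linearises the $2k$ endpoints, after which the forbidden arcs behave exactly like the forbidden intervals analysed in the proof of Theorem~\ref{thm:interval}, except for at most one arc straddling the cut. Absorbing that single wrap-around possibility into the base cases of the recurrences for $A_\mrm{b}$ and $A_\mrm{u}$ alters only the constants and not the exponential growth rate, so the count for a fixed $\pi$ remains $O(2^{2.19\cdot 2k})=O(2^{4.38k})$. Multiplying the two factors yields $N=(2k)!\cdot 2^{4.38k}$, and feeding this into the per-node cost estimates (join nodes costing $O(N^2\,\mrm{poly}(k))$, introduce and forget nodes $O(N\,\mrm{poly}(k))$, over $O(|V|)$ nodes) gives the stated bounds.

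The step I expect to be most delicate is precisely this linearisation, since on a circle there is no global notion of ``left'' and ``right.'' I must check that fixing a cut does not overcount and that the Forget-node merge of arcs crossing the cut is still well-defined under the circular $\suc{\pi}$ and $\pred{\pi}$ conventions, so that the straddling arc contributes only a constant factor. Once the cut is fixed and that single wrap-around arc is accounted for, the remainder of the argument is a routine transcription of the interval analysis.
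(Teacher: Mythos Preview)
Your proposal is correct and follows essentially the same route as the paper: the paper's own proof is only two sentences, asserting that there are at most $(2k+2)!$ choices of $\pi$ and that the $O(2^{4.38k})$ bound on forbidden-interval sets carries over ``similarly to the proof of Theorem~\ref{thm:interval}.'' Your explanation of why the $2^k$ divisor disappears (because $r_u <_\pi l_u$ is now permitted) and your linearisation-by-cutting argument for the forbidden-arc count are exactly the details the paper leaves implicit, so you have in fact supplied more justification than the original.
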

\begin{proof}
There can be $(2k+2)!$ varieties of $\pi$.
One can argue that each $\pi$ has at most $O(2^{4.38 k})$ sets of forbidden intervals similarly to the proof of Theorem~\ref{thm:interval}.
\end{proof}

\subsection{Threshold graphs}
Threshold graphs are special cases of trivially perfect graphs, which can be defined in several different ways.
Here we use a pair of a vertex subset $W \subseteq V$ and a linear order $\pi$ over $\mit{R}_V$ as a \emph{threshold interval representation}.
We say that vertices $u$ and $v$ \emph{intersect} on $(W,\pi)$ if and only if $u \in W$ and $r_{v} <_\pi r_{u}$ or the other way around.
A \emph{threshold graph} is a graph $G_{W,\pi} = (V,E_{W,\pi})$ where $(W,\pi)$ is a threshold interval representation on $V$ and
$E_{W,\pi} = \{\, \{u,v\} \subseteq V \mid \text{$u$ and $v$ intersect on $(W,\pi)$} \,\}$.
By extending $\pi$ to $\pi'$ over $\mathit{LR}_V$ so that $l_w <_{\pi'} r_v$ for all $w \in W$ and $v \in V$ and $\suc{\pi'}(l_u) = r_u$ for all $u \in V - W$, then the induced interval graph coincides with the threshold graph.
To attain drastic improvement on the complexity, we design an algorithm for the edge deletion problem for threshold graphs from scratch, rather than modifying the one for interval graphs.

For a threshold representation $(Y,\rho)$ of a subgraph $G_{Y,\rho}=(X_{\le s},E_{Y,\rho})$, we define its abstraction $\mscr{A}((Y,\rho),s) = (Y',\pi,b,p,c)$ as follows:
(1) $\pi$ is the restriction of $\rho$ to $R_{X_s}$,
(2) $Y' = Y \cap X_s$,
(3) if $Y'=Y$, then $b=0$ and
\(
p = \max_{\pi}\{\, p \in R_{X_s} \mid p <_\rho r_y \text{ for all } y \in X_{\le s} - X_s \,\}
,\)
(4) if $Y' \neq Y$, then $b=1$ and 
\(
p = \max_{\pi}\{\, p \in R_{X_s} \mid p <_\rho r_y \text{ for some } y \in Y - Y' \,\}
\,\), and 
(5) $c = |E_{\le s} - E_{Y,\rho} - E_{s}|= |\{\, \{u,v\} \in E \mid \{u,v\} \nsubseteq X_s \text{ and $u$ and $v$ do not intersect on $(Y,\rho)$} \,\}|$.
  
We say that $(Y',\pi',b',p',c')$ \emph{dominates} $(Y,\pi,b,p,c)$ if
 $Y' = Y$, $\pi'=\pi$, $c' \le c$, and either 
(a) $b'=b=0$ and $p' \ge_{\pi} p$,
(b) $b'=b=1$ and $p' \le_{\pi} p$, or
(c) $b'=0$ and $b=1$.
Using the above invariant, we can provide an algorithm for \textsc{Threshold-Edge-Deletion}. 

Our algorithm assigns a set $\mscr{I}_s$ for each node $s$ of $T$ so that the threshold counterpart of Condition~\ref{cond:invariant_i} holds.
Accordingly $\mscr{I}_s =  \{ (\emptyset, o, 0, \bot, 0) \}$ for leaf nodes $s$.

\paragraph*{Introduce Node:}
Suppose $s$ has just one child $t$ such that $X_{s} = X_t \cup \{x\}$.
For each $(Y,\pi,b,p,c) \in \mscr{I}_t$, we add to $\mscr{I}'_s$ all tuples $(Y',\pi',b,p',c)$ fulfilling the following conditions:
\begin{itemize}
	\item $\pi'$ is an extension of $\pi$ to $R_{X_s}$,
	\item $Y \subseteq Y' \subseteq Y \cup \{x\}$.
	\item if $\{x,u\} \notin E$ for $u \in X_{t}$, then $x$ and $u$ do not intersect in $(Y',\pi')$.
	\item if $b = 0$, then $x \notin Y'$ or $r_x <_{\pi'} \suc{\pi}(p)$,
	\\	and moreover $p' = \begin{cases} r_x & \text{if $p <_{\pi'} r_x <\suc{\pi}(p)$,}
		\\	p & \text{otherwise,}	 \end{cases}$
	\item if $b = 1$, then $x \notin Y'$ and $p' = p <_{\pi'} r_x$.
\end{itemize}
We then obtain $\mscr{I}_{s}$ by reducing $\mscr{I}'_{s}$.

\paragraph*{Forget Node:}
Suppose $s$ has just one child $t$ such that $X_{t} = X_s \cup \{x\}$.
For each $(Y,\pi,b,p,c) \in \mscr{I}_t$, we add the tuple $(Y',\pi',b',p',c')$ to $\mscr{I}'_s$ where
\begin{itemize}
	\item $\pi'$ is the restriction of $\pi$ for $R_{X_s}$,
	\item $Y' = Y - \{x\}$,
	\item $b'=1$ if $x \in Y$, and $b'=b$ otherwise,
	\item if $b'=0$, then $p'= \min\{p, \pred{\pi}(r_x)\}$,
	\item if $b'=1$, then
	$p' = \begin{cases} \pred{\pi}(r_x) & \text{if $b=0$ or $p <_\pi r_x \wedge x \in Y$ or $r_x = p$},
	\\	p	&\text{otherwise.}\end{cases}$
	\item $c'=c+|\{\, \{u,x\} \in E \mid \text{$u \in X_s$ and $x$ do not intersect on $(Y,\pi)$} \,\}|$
\end{itemize}
We then obtain $\mscr{I}_{s}$ by reducing $\mscr{I}'_{s}$.

\paragraph*{Join node:}
Suppose $s$ has two children $t_1$ and $t_2$, where $X_{s} = X_{t_1} = X_{t_2}$.
We add $(Y,\pi,b,p,c)$ to $\mscr{I}'_{s}$ if there are
 $(Y,\pi,b_1,p_1,c_1) \in \mscr{I}_{t_1}$ and $(Y,\pi,b_2,p_2,c_2) \in \mscr{I}_{t_2}$
such that $c=c_1+c_2$, and either
\begin{itemize}
\item $b = b_1 = b_2$ and $p=\min_{\pi_1}\{p_1,p_2\}$,
\item $b=b_1=1$, $b_2=0$ and $p = p_1 \le_{\pi} p_2$, or
\item $b=b_2=1$, $b_1=0$ and $p = p_2 \le_{\pi} p_1$.
\end{itemize}
We then obtain $\mscr{I}_{s}$ by reducing $\mscr{I}'_{s}$.

\begin{theorem}
The edge deletion problem for threshold graphs can be solved in $O(|V| N^2 \mrm{poly}(k))$ time 
where $N=k! \cdot 2^{k}$ for the treewidth $k$ of $G$.
If $k$ is the pathwidth, it can be solved in $O(|V| N \mrm{poly}(k))$ time.
\end{theorem}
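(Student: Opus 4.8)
The plan is to follow the same template as the proof of Theorem~\ref{thm:interval}: first bound the number of distinct abstractions that can occur in a single set $\mscr{I}_s$, then bound the work done at each kind of node, and finally multiply by the $O(|V|)$ nodes of the nice tree-decomposition. Let $k$ be the maximum size of a bag $X_s$. The crucial difference from the interval case is that a threshold abstraction records a linear order $\pi$ over $R_{X_s}$ \emph{only} (the left endpoints are irrelevant, since every $l_w$ may be placed before every $r_v$), so there are merely $k!$ choices for $\pi$ rather than $(2k)!/2^k$. This is precisely the source of the drastic speed-up, and I would highlight it at the outset.

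For the counting step I would argue directly, with no recurrence to solve. An abstraction is a tuple $(Y',\pi,b,p,c)$ in which $\pi$ ranges over the $k!$ orders of $R_{X_s}$, the set $Y' \subseteq X_s$ ranges over $2^k$ subsets, $b \in \{0,1\}$, and $p$ ranges over $R_{X_s} \cup \{\bot\}$, i.e.\ at most $k+1$ values. It then remains to observe that $c$ is determined by the other four components once $\mscr{I}_s$ is reduced: if two tuples agreed on $(Y',\pi,b,p)$ but carried different counts, then, since both $p \le_\pi p$ and $p \ge_\pi p$ hold, the one with the smaller $c$ would dominate the other, contradicting reducedness. Hence $|\mscr{I}_s| \le k! \cdot 2^k \cdot 2(k+1) \in O(N\,\mrm{poly}(k))$ with $N = k! \cdot 2^k$.

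Next I would bound the per-node cost exactly as in Theorem~\ref{thm:interval}. At an introduce node each parent tuple $(Y,\pi,b,p,c)$ spawns one tuple for each admissible pair consisting of an extension $\pi'$ of $\pi$ (an insertion of $r_x$ into one of at most $k$ gaps) and a choice of $Y' \in \{Y,\, Y\cup\{x\}\}$; the remaining components $b$ and $p'$ are then \emph{forced} by the update rules, so at most $O(k)$ tuples are produced, giving $O(N\,\mrm{poly}(k))$ in total after reduction. At a forget node each parent tuple yields exactly one child, again $O(N\,\mrm{poly}(k))$. At a join node I would examine all pairs in $\mscr{I}_{t_1}\times\mscr{I}_{t_2}$, testing compatibility and forming the combined tuple in $\mrm{poly}(k)$ time each; since compatible pairs must share the same $(Y',\pi)$ one could group first, but the crude all-pairs bound $O(N^2\,\mrm{poly}(k))$ already suffices. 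Summing over the $O(|V|)$ nodes, the join cost dominates and yields $O(|V|\,N^2\,\mrm{poly}(k))$. A nice path-decomposition has no join nodes, so there the introduce/forget bound dominates and the running time drops to $O(|V|\,N\,\mrm{poly}(k))$.

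The counting here is markedly easier than in the interval and permutation cases treated above, where one must solve a linear recurrence; in the threshold setting the bound is read off directly from the factorisation into independent components. Consequently the only points that need genuine care are the two reductions just used: that reducedness pins down $c$ (immediate from reflexivity of the $p$-comparison in the domination order), and that an introduce node really produces only $O(k)$ tuples per parent, which rests on the observation that the new values of $b$ and $p$ are deterministic functions of $\pi'$ and $Y'$. Neither presents a real obstacle, so the main effort is bookkeeping rather than a new idea.
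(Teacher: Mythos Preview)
Your proposal is correct and is exactly the argument the paper intends: the paper states this theorem without proof, relying on the template of Theorem~\ref{thm:interval} and the explicit algorithm description. Your counting---$k!$ orders on $R_{X_s}$, $2^k$ subsets $Y'$, two values of $b$, $k+1$ values of $p$, with $c$ pinned down by reducedness---is the straightforward fill-in, and your per-node analysis matches the pattern used earlier.
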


%

\section{Conclusion}\label{sec:conc}
We have proposed FPT algorithms for \textsc{Edge-Deletion} to some intersection graphs parameterized by treewidth in this paper.
Our algorithms maintain partial intersection models on a node of a tree decomposition with some restrictions and extend the models consistently for the restrictions in the next step. 
We expect that the ideas in our algorithms can be applied to other intersection graphs whose intersection models can be represented as linear-orders, for example circle graphs, chain graphs and so on, and to \textsc{Vertex-Deletion} of intersection graphs. 

We have the following questions as future work: 
\begin{itemize}
\item Do there exist single exponential time algorithms for the considered problems, that is, $O^*(2^{\tw(G)})$ time, or can we show matching lower bounds assuming the Exponential Time Hypothesis?
\item Are there FPT algorithms parameterized by treewidth for $\C$-\textsc{Completion} which is to find the minimum number of adding edges to obtain a graph in an intersection graph class $\C$?
  We can naturally apply the idea of our algorithms to $\C$-\textsc{Completion} problems. 
  While $\C$-\textsc{Edge-Deletion} algorithms do not allow introduced objects to intersect with forgotten objects, $\C$-\textsc{Completion} algorithms do allow it with the cost of addition of new edges.
  Thus $\C$-\textsc{Completion} algorithms based on this naive approach will be XP algorithms since we have to remember the number of forgotten objects in the representation to count the number of intersections between the introduced objects and forgotten objects.
\item Are there FPT algorithms for \textsc{Edge-Deletion} to intersection graphs defined using objects on a plane, like unit disk graphs?
The intersection graph classes discussed in this paper are all defined using objects aligned on a line.
Going up to a geometric space of higher dimension is a challenging topic. 
\end{itemize}

\subsubsection*{Acknowledgement}
We are very much grateful to Yasuaki Kobayashi for his valuable comment that has improved the complexity analyses.
This work was supported in part by JSPS KAKENHI Grant Numbers JP18H04091 and JP19K12098.

%
%
\bibliographystyle{plain}
\bibliography{ref}


\end{document}